\newcommand{\tid}{\mbox{{\bf 1}}}
\renewcommand{\le}{\leqslant}
\renewcommand{\ge}{\geqslant}
\newcommand{\ol}{\overline}
\newcommand{\eps}{\varepsilon}
\newcommand{\emp}{\emptyset}
\newcommand{\Sig}{\Sigma}
\newcommand{\noin}{\noindent}
\newcommand{\bi}{\begin{itemize}}
\newcommand{\ei}{\end{itemize}}
\newcommand{\be}{\begin{enumerate}}
\newcommand{\ee}{\end{enumerate}}
\newcommand{\bd}{\begin{description}}
\newcommand{\ed}{\end{description}}
\newcommand{\bq}{\begin{quote}}
\newcommand{\eq}{\end{quote}}
\newcommand{\cD}{{\mathcal D}}
\newcommand{\one}{{\mathbf 1}}
\spnewtheorem{conj}{Conjecture}{\bfseries}{\rmfamily}
\newcommand{\distlemma}{the Distinguishability Lemma}
\title{Quotient Complexities of Atoms in Regular Ideal Languages \thanks{This work was supported by the Natural Sciences and Engineering Research Council of Canada under grant No.~OGP0000871.}
}
\author{Janusz~Brzozowski\inst{1} \and Sylvie Davies\inst{2}}
\titlerunning{Quotient Complexities of Atoms in Regular Ideal Languages}
\authorrunning{J. Brzozowski and S. Davies}   
\institute{David R. Cheriton School of Computer Science, University of Waterloo \\
Waterloo, ON, Canada N2L 3G1\\
{\tt brzozo@uwaterloo.ca}
\and
Department of Pure Mathematics, University of Waterloo \\
Waterloo, ON, Canada N2L 3G1\\
{\tt sldavies@uwaterloo.ca}
}
\begin{document}

\maketitle
%\today
\begin{abstract}
A (left) quotient of a language $L$ by a word $w$ is the language $w^{-1}L=\{x\mid wx\in L\}$.
The quotient complexity of a regular language $L$ is the number of quotients of $L$; it is equal to the state complexity of $L$, which is the number of states in a minimal deterministic finite automaton accepting $L$.
An atom of $L$  is an equivalence class of the relation in which two words are equivalent if for each quotient, they either are both in the quotient or both not in it; 
hence  it is a non-empty intersection of complemented and uncomplemented  quotients of $L$.
A right (respectively, left and two-sided) ideal is a language $L$ over an alphabet $\Sig$ that satisfies $L=L\Sig^*$ (respectively, $L=\Sig^*L$ and $L=\Sig^*L\Sig^*$).
We compute the maximal number of atoms and  the maximal quotient complexities of atoms of right, left and two-sided regular ideals.
\medskip

\noin
{\bf Keywords:}
atom,  quotient, regular language, left ideal, quotient complexity, right ideal, state complexity,  syntactic semigroup, two-sided ideal

\end{abstract}

\section{Introduction}

We assume that the reader is familiar with basic concepts of regular languages and finite automata; more background is given in the next section.
Consider a regular language $L$ over a finite non-empty alphabet $\Sig$.
Let $\cD=(Q,\Sig,\delta,q_1, F)$ be a minimal \emph{deterministic finite automaton (DFA)} recognizing $L$, where $Q$ is the set of \emph{states}, $\delta\colon Q\times \Sig\to Q$ is the \emph{transition function}, $q_1$ is the \emph{initial} state, and $F\subseteq Q$ is the set of \emph{final} states.
There are three natural equivalence relations associated with $L$ and $\cD$. 

The  \emph{Nerode right congruence}~\cite{Ner58} is defined as follows: Two words $x$ and $y$ are equivalent if for every $v\in \Sig^*$, $xv$ is in $L$ if and only if $yv$ is in $L$. 
The set of all words that ``can follow'' a given word $x$ in $L$ is the \emph{left quotient of $L$ by $x$}, defined by $x^{-1}L=\{v\mid vx\in L\}$. 
In automaton-theoretic terms $x^{-1}L$ is the set of all words $v$ that are accepted from the state $q=\delta(q_1,x)$ reached when $x$ is applied to the initial state of $\cD$; this is  known as the \emph{right language} of state~$q$, the language accepted by  DFA $\cD_q=(Q,\Sig,\delta,q, F)$. The Nerode equivalence class containing $x$ is  known as the \emph{left language} of state~$q$, the language accepted by DFA ${}_q\cD=(Q,\Sig,\delta, q_1, \{q\})$.
The number $n$ of Nerode equivalence classes is the number of distinct left quotients of $L$,  known as its \emph{quotient complexity}~\cite{Brz10}. This is the same number as the number of states in $\cD$, and is therefore known as $L$'s \emph{state complexity}~\cite{Yu01}. 
Quotient/state complexity is now a commonly used measure of complexity of a regular language, and constitutes a basic reference for other measures of complexity. One can also define the quotient complexity of a Nerode equivalence class, that is, of the language accepted by  DFA ${}_q\cD$. In the worst case -- for example, if $\cD$ is strongly connected --  this is $n$ for every $q$.

The \emph{Myhill congruence}~\cite{Myh57} refines the Nerode right congruence and is a (two-sided) congruence. Here word $x$ is equivalent to word $y$ if for all $u$ and $v$ in $\Sig^*$, $uxv$ is in $L$ if and only if $uyv$ is in $L$. This is also known as the \emph{syntactic congruence}~\cite{Pin97} of $L$. The quotient set of $\Sig^+$ by this congruence is the \emph{syntactic semigroup} of $L$. 
In automaton-theoretic terms two words are equivalent if they induce the same transformation of the set of states of a minimal DFA of $L$.
The quotient complexity of Myhill classes has not been studied.

The third equivalence, which we call the \emph{atom congruence} is a left congruence refined by the Myhill congruence. Here two words $x$ and $y$ are equivalent if 
 $ux\in L$ if and only if  $uy\in L$ for all $u\in \Sig^*$. 
 Thus $x$ and $y$ are equivalent if
 $x\in u^{-1}L$ if and only if $y\in u^{-1}L$.
 An equivalence class of this relation is called an \emph{atom} of $L$~\cite{BrTa14}. 
It follows that an atom is a non-empty intersection of complemented and uncomplemented quotients of $L$.

This congruence is related to the Myhill and Nerode congruences in a natural way. Say a congruence on $\Sig^*$ \emph{recognizes} $L$ if $L$ can be written as a union of the congruence's classes. The Myhill congruence is the unique \emph{coarsest} congruence (that is, the one with the fewest equivalence classes) that recognizes $L$~\cite{Pin97}. The Nerode and atom congruences are respectively the coarsest \emph{right} and \emph{left} congruences that recognize $L$.

The quotient complexity of atoms of regular languages has been studied in~\cite{BrDa14,BrTa13,Iva14}.
In this paper we study the quotient complexity of atoms in three subclasses of regular languages, namely, right, left, and two-sided ideals.

Ideals are fundamental concepts in semigroup theory. A language $L$ over an alphabet $\Sig$ is a 
\emph{right} (respectively, \emph{left} and \emph{two-sided}) \emph{ideal} if $L=L\Sig^*$ (respectively, $L=\Sig^*L$ and $L=\Sig^*L\Sig^*$).
The quotient complexity of regular ideal languages has been studied in~\cite{BJL13}, and the reader should refer to that paper for more information about ideals.
Ideals appear in  pattern matching.  A right (left) ideal $L\Sig^*$ ($\Sig^*L$) represents the set of all words beginning (ending) with some word of a  given set $L$, and $\Sig^*L\Sig^*$ is the set of all words containing a factor from $L$.

\section{Preliminaries}

It is well known that a language $L\subseteq \Sig^*$ is regular if and only if it has a finite number of quotients. We denote the number of quotients of $L$  (the \emph{quotient complexity}) by $\kappa(L)$. This is the same as the \emph{state complexity}, the number of states in a minimal DFA of $L$. Since we will not be discussing other measures of complexity, we refer to both quotient and state complexity as just \emph{complexity}.

Let the set of quotients of a regular language $L$ be $K=\{K_1,\dots,K_n\}$.
The \emph{quotient automaton} of $L$ is the DFA $\cD=(K,\Sig,\delta,L, F)$, where 
$\delta(K_i,a)=K_j$ if $a^{-1}K_i=K_j$, $L=K_1=\eps^{-1}L$ by convention, and $F=\{K_i\mid \eps \in K_i\}$. This DFA is uniquely defined by $L$ and is isomorphic to every minimal DFA of $L$.

A \emph{transformation} of a set $Q_n$ of $n$ elements is a mapping of $Q_n$ \emph{into} itself, whereas a \emph{permutation}
of $Q_n$ is a mapping of $Q_n$ \emph{onto} itself.
In this paper we consider only transformations of finite sets, and we assume
without loss of generality  that $Q_n=\{1,\ldots, n\}$.
An arbitrary transformation has the form
$$
t=\left( \begin{array}{ccccc}
1 & 2 &   \cdots &  n-1 & n \\
i_1 & i_2 &   \cdots &  i_{n-1} & i_{n}
\end{array} \right ),
$$
where $i_k\in Q_n$ for $1\le k\le n$.
The image of element $i$ under transformation $t$ is denoted by $it$.
The image of $S \subseteq Q_n$ is $St = \cup_{i\in S}\{it\}$.
The \emph{identity} transformation $\tid$ maps each element to itself.
For $k\ge 2$, a transformation (permutation) $t$ is a \emph{$k$-cycle} if there is a set $P=\{q_1,q_2,\ldots,q_{k}\} \subseteq Q_n$ such that if $q_1t=q_2, q_2t=q_3,\ldots,q_{k-1}t=q_{k},q_{k}t=q_1$, and $qt = q$ for all $q \not\in P$.
A $k$-cycle is denoted by $(q_1,q_2,\ldots,q_{k})$.
A~2-cycle $(q_1,q_2)$ is called a \emph{transposition}.
A transformation is \emph{constant} if it maps all states to a single state $q$; we denote it by $(Q_n\to q)$.
A  transformation $t$ is \emph{unitary}  if $p\neq q$, $pt =q$ and $rt=r$ for all $r\neq p$; we denote it by $(p\to q)$. 
The following is well-known:

\begin{proposition}
\label{prop:piccard}
The complete transformation monoid $T_n$ of size $n^n$ can be generated by any generators of the symmetric group $S_n$ (the group of all permutations of $Q_n$) together with a unitary transformation. In particular, $T_n$ can be generated by $\{(1,\dotsc, n),(1,2),(n \to 1)\}$, and by $\{(1,\dotsc,n),(2,\dotsc,n),(n \to 1)\}$.
\end{proposition}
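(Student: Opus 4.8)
The plan is to establish the general statement first and then read off the two explicit generating sets as corollaries. So suppose we are given all of $S_n$ together with a single unitary transformation $(p\to q)$, and write $M$ for the monoid they generate. The first step is to produce \emph{every} unitary from this one. Working with the right-action convention of the paper, a direct computation shows that for any permutation $\pi$ we have $\pi^{-1}(p\to q)\pi=(p\pi\to q\pi)$, since the only point moved by the conjugate is $p\pi$ and it is sent to $q\pi$. Given any target unitary $(p'\to q')$ with $p'\neq q'$, I would choose $\pi\in S_n$ with $p\pi=p'$ and $q\pi=q'$ (possible as both are pairs of distinct points); hence every unitary transformation lies in $M$.

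The core of the argument is then an induction on the \emph{defect} $n-|Q_nt|$ of a transformation $t$, showing $t\in M$. When the defect is $0$ the transformation is a permutation and already lies in $S_n\subseteq M$. For the inductive step I would take $t$ of defect $d\ge 1$, pick a value $m\notin Q_nt$ and a collision $it=jt=k$ with $i\neq j$, and define $t'$ to agree with $t$ except that $it'=m$. Then $m$ is newly attained while $k$ is still attained by $j$, so $t'$ has defect $d-1$. The key identity to verify is $t=t'\cdot(m\to k)$: applying $t'$ and then the unitary $(m\to k)$ sends $i\mapsto m\mapsto k$, sends $j\mapsto k\mapsto k$, and on every other point $x$ reproduces $xt$, because $xt\neq m$ as $m$ is missing from $Q_nt$. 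By the induction hypothesis $t'\in M$, and $(m\to k)\in M$ by the previous paragraph, so $t\in M$; this gives $M=T_n$. I expect this factorization step — choosing $m$ and $k$ correctly and checking the three cases — to be the main point requiring care, though it is routine once set up.

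Finally, the two explicit generating sets follow by checking that the two permutations in each set already generate $S_n$ and then invoking the general statement with the unitary $(n\to 1)$ (legitimate since $n\neq 1$). That $\{(1,\dots,n),(1,2)\}$ generates $S_n$ is classical. For $\{(1,\dots,n),(2,\dots,n)\}$, writing $g=(1,\dots,n)$ and $h=(2,\dots,n)$, I would compute $gh^{-1}=(1,n)$ directly, and then conjugate by powers of $g$: since $\pi^{-1}(1,n)\pi=(1\pi,n\pi)$, taking $\pi=g^{k}$ for $k=1,\dots,n-1$ yields in turn all the adjacent transpositions $(1,2),(2,3),\dots,(n-1,n)$. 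As these generate $S_n$, we obtain $\langle g,h\rangle=S_n$, and the general statement completes the proof.
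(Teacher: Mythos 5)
Your proposal is correct, but note that the paper itself offers no proof of this proposition at all: it is stated with the preface ``The following is well-known,'' i.e.\ it is invoked as a classical fact (essentially Piccard's theorem on generators of the full transformation monoid). Your argument supplies a complete, self-contained proof of exactly the kind the paper omits. The structure is the standard one and it checks out: conjugation $\pi^{-1}(p\to q)\pi=(p\pi\to q\pi)$ gives all unitaries from one; the induction on the defect $n-|Q_nt|$ is sound, since in the factorization $t=t'\cdot(m\to k)$ you have $k\in Q_nt$ and $m\notin Q_nt$, so $k\neq m$ and the three cases ($i\mapsto m\mapsto k$, $j\mapsto k\mapsto k$, and $x\mapsto xt$ with $xt\neq m$) close correctly, while $t'$ indeed has defect $d-1$ because $j$ still attains $k$. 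Your verification of the second, less standard generating set is also right: with the right-action convention, $gh^{-1}=(1,n)$ for $g=(1,\dots,n)$, $h=(2,\dots,n)$, and conjugating by powers of $g$ yields all adjacent transpositions. What your route buys is self-containedness, including an explicit check that $\{(1,\dots,n),(2,\dots,n)\}$ generates $S_n$, which the paper's citation leaves implicit; what the paper's route buys is brevity, deferring to a result that is genuinely standard in the semigroup literature.
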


For a DFA $\cD = (Q,\Sig,\delta,q_1,F)$ we define the transformations $\{\delta_w \mid w \in \Sig^+\}$ by $q\delta_a = \delta(q,a)$ for $a \in \Sig^*$, and $q\delta_{w} = q\delta_x\delta_a$ for $w = xa$. This set is a semigroup under composition and it is called the \emph{transition semigroup} of $\cD$. 
The transformation $\delta_w$ is called the \emph{transformation induced by $w$}. To simplify notation, we usually make no distinction between the word $w \in \Sig^+$ and the transformation $\delta_w$. 
If $\cD$ is the quotient automaton of $L$, then the transition semigroup of $\cD$ is isomorphic to the syntactic semigroup of $L$~\cite{Pin97}.
A state $q \in Q$ is \emph{reachable from $p \in Q$} if $pw = q$ for some $w \in \Sig^+$, and \emph{reachable} if it is reachable from $q_1$.
Two states $p,q$ are \emph{indistinguishable} if $pw \in F \Leftrightarrow qw \in F$ for all $w \in \Sig^+$, and \emph{distinguishable} otherwise.
Indistinguishability is an equivalence relation on $Q$; furthermore, if $\cD$ recognizes a language $L$, we can compute $\kappa(L)$ by counting the number of equivalence classes under  indistinguishability of the reachable states of $\cD$.
A state is \emph{empty} if its right language (defined in the introduction) is $\emp$.

\section{Atoms}
Atoms of regular languages were studied in~\cite{BrTa14}, and their complexities  in~\cite{BrDa14a,BrTa13}.
As discussed earlier, atoms are the classes of the \emph{atom congruence}, a left congruence which is the natural counterpart of the Myhill two-sided congruence and Nerode right congruence. 
The Myhill and Nerode congruences are fundamental in regular language theory, but it seems comparatively little attention has been paid to the atom congruence and its classes.
In~\cite{Brz13} it was argued that it is useful to consider the complexity of a language's atoms when searching for highly complex regular languages, since one would expect such languages to have highly complex atoms.

Below we present an alternative characterization of atoms, which we use in our proofs. Earlier papers on atoms such as~\cite{BrDa14a,BrTa13,BrTa14} take this as the definition of atoms, for it was not known until recently that atoms may be viewed as congruence classes (this fact was first noticed by Iv\'an in~\cite{Iva14}).

From now on assume all languages are non-empty.
Denote the complement of a language $L$ by $\ol{L} = \Sig^* \setminus L$.
Let $Q_n=\{1,\dots,n\}$ and let $L$ be a regular language with quotients $K = \{K_1,\dotsc,K_n\}$. Each subset $S$ of $Q_n$ defines an \emph{atomic intersection} $A_S = \bigcap_{i \in S} K_i \cap \bigcap_{i \in \ol{S}} \ol{K_i}$, where $\ol{S} = Q_n \setminus S$.
An \emph{atom} of $L$ is a non-empty atomic intersection. 
Since atoms are pairwise disjoint,  every atom $A$ has a unique atomic intersection associated with it, and this atomic intersection has a unique subset $S$ of $K$ associated with it. This set $S$ is called the \emph{basis} of $A$.

Throughout the paper, $L$ is a regular language of complexity $n$ with quotients $K_1,\dotsc,K_n$ and minimal DFA $\cD = (Q_n,\Sig,\delta,1,F)$ such that the language of state $i$ is $K_i$. 
Let $A_S = \bigcap_{i \in S} K_i \cap \bigcap_{i \in \ol{S}} \ol{K_i}$ be an atom. For any $w\in\Sig^*$ we have 
$$w^{-1}A_S = \bigcap_{i \in S}w^{-1} K_i \cap \bigcap_{i \in \ol{S}} \ol{w^{-1} K_i}.$$
Since a quotient of a quotient of $L$ is also a quotient of $L$, $w^{-1}A_S$ has the form;
$$w^{-1} A_S = \bigcap_{i \in X} K_i \cap \bigcap_{i \in Y} \ol{K_i},$$
where $|X| \le |S|$ and $|Y| \le n-|S|$, $X,Y \subseteq Q_n$.

The complexity of atoms of a regular language was computed in~\cite{BrTa13} using a unique NFA defined by $L_n$, called the \emph{\'atomaton}. In that NFA the language of each state $q_S$ is an atom $A_S$ of $L_n$. To find the complexity of that atom, the \'atomaton started in state $q_S$ was converted to an equivalent DFA. 
A more direct and simpler method was used by Szabolcs Iv\'an~\cite{Iva14} who constructed the DFA for the atom directly from the DFA $\cD_n$. We follow that approach here and outline it briefly for completeness. 

For any regular language $L$ an atom $A_S$ corresponds to the ordered pair $(S,\ol{S}$), where $S$ ($\ol{S}$) is the set of subscripts of uncomplemented (complemented) quotients. If $L$ is represented by a DFA $\cD=(Q,\Sig,\delta, q_1, F)$, it is more convenient to think of $S$ and $\ol{S}$ as subsets of $Q$.
Similarly, any quotient of $A_S$ corresponds to a pair $(X,Y)$ of subsets of $Q$. 
For the quotient of $A_S$ reached when a letter $a\in \Sig$ is applied to the quotient corresponding to $(X,Y)$ we get
\[ a^{-1} \left(\bigcap_{i \in X} K_i \cap \bigcap_{i \in Y} \ol{K_i}\right)
= \bigcap_{i \in X} a^{-1} K_i \cap \bigcap_{i \in Y} \ol{a^{-1}K_i}
= \bigcap_{i \in X} K_{ia} \cap \bigcap_{i \in Y} \ol{K_{ia}}. \]
In terms of pairs of subsets of $Q$, from $(X,Y)$ we reach $(Xa,Ya)$. 
Note that if $X\cap Y\neq \emp$ in $(X,Y)$ then the corresponding quotient is empty.
Note also that the quotient of atom $A_S$ corresponding to $(X,Y)$ is final if and only if
each quotient $K_i$ with $i\in X$ contains $\eps$, and each $K_j$ with $j \in Y$ does not contain $\eps$.

These considerations lead to the following definition of a DFA for $A_S$.
\begin{definition}
Suppose $\cD=(Q,\Sig,\delta, q_1, F)$ is a DFA and let $S \subseteq Q$.
Define the DFA $\cD_S = (Q_S,\Sig,\Delta,(S,\ol{S}),F_S)$, where
\bi
\item
$Q_S = \{(X,Y) \mid X,Y \subseteq Q, X \cap Y = \emp\} \cup \{\bot\}$.
\item
For all $a \in \Sig$, $\Delta((X,Y),a) = (\delta(X,a),\delta(Y,a))$ if $\delta(X,a) \cap \delta(Y,a) \ne \emp$, and $\Delta((X,Y),a) = \bot$ otherwise; and $\Delta(\bot,a) = \bot$.
\item
$F_S = \{(X,Y) \mid X\subseteq F, Y \subseteq \ol{F}\}$. 
\ei
\end{definition}
DFA $\cD_S$ recognizes the atomic intersection $A_S$ of $L$. If $\cD_S$ recognizes a non-empty language, 
then $A_S$ is an atom. 

\section{Complexity of Atoms in Regular Languages}

Upper bounds on the maximal complexity of atoms of regular languages were derived in~\cite{BrTa13}; for completeness we include these results.
For $n=1$ there is only one non-empty language $L=\Sig^*$; it has one atom, $L$, which is of complexity~1. From now on assume that $n\ge 2$.

\begin{proposition}
\label{prop:reg}
Let $L$ be a regular language with $n\ge 2$ quotients. Then $L$ has at most $2^n$ atoms.
If $S\in \{Q_n,\emp\}$, then $\kappa(A_S) \le 2^n-1$ quotients. Otherwise,
\[ \kappa(A_S) \le 1+ \sum_{x=1}^{|S|}\sum_{y=1}^{n-|S|} \binom{n}{x}\binom{n-x}{y}. \]
\end{proposition}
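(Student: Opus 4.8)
The plan is to count atoms via the bijection between atoms and their bases, and to bound each $\kappa(A_S)$ by counting the reachable states of the DFA $\cD_S$. For the number of atoms, every atom is a non-empty atomic intersection $A_S$ and each atom has a unique basis $S \subseteq Q_n$, so the map sending an atom to its basis is injective. Hence the number of atoms is at most the number of subsets of $Q_n$, namely $2^n$, which settles the first claim.

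For the complexity bounds, I would use the fact that $\cD_S$ recognizes $A_S$, so $\kappa(A_S)$ is at most the number of reachable states of $\cD_S$. The first step is to identify which states are reachable. Starting from the initial state $(S,\ol{S})$ and applying a word $w$, the reachable non-sink state is $(X,Y)$ with $X = \delta(S,w)$ and $Y = \delta(\ol{S},w)$. Since the image of a set under a transformation cannot grow, $|X| \le |S|$ and $|Y| \le |\ol{S}| = n-|S|$; moreover the image of a non-empty set is non-empty, so $X \ne \emp$ whenever $S \ne \emp$ and $Y \ne \emp$ whenever $\ol{S} \ne \emp$. By the definition of $\cD_S$, the pair $(X,Y)$ is a genuine (non-sink) state only when $X \cap Y = \emp$.

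Next I would split into cases according to $S$. When $S = Q_n$ we have $\ol{S} = \emp$, so every reachable state has the form $(X,\emp)$ with $\emp \ne X \subseteq Q_n$; since $X \cap \emp = \emp$ always, the sink $\bot$ is never reached, and there are at most $2^n-1$ non-empty subsets $X$, giving $\kappa(A_{Q_n}) \le 2^n-1$. The case $S = \emp$ is symmetric, with reachable states $(\emp,Y)$. When $1 \le |S| \le n-1$, both $X$ and $Y$ are non-empty, disjoint, and satisfy $1 \le |X| \le |S|$ and $1 \le |Y| \le n-|S|$. Counting such pairs by their sizes $x = |X|$ and $y = |Y|$ — choosing $X$ in $\binom{n}{x}$ ways and then $Y$ from the $n-x$ remaining elements in $\binom{n-x}{y}$ ways — yields $\sum_{x=1}^{|S|}\sum_{y=1}^{n-|S|}\binom{n}{x}\binom{n-x}{y}$ possible non-sink states; adding $1$ for the possibly reachable sink $\bot$ gives the stated bound.

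The computation itself is routine; the only points requiring care are verifying the non-emptiness of $X$ and $Y$ (which depends on $S$ and $\ol{S}$ being non-empty, hence on separating out the two degenerate bases $Q_n$ and $\emp$) and correctly accounting for the sink state $\bot$, which is unreachable in the two degenerate cases but must be included in the general case.
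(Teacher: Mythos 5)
Your proposal is correct and follows essentially the same argument as the paper: the same injection from atoms to bases, the same case split ($S=Q_n$, $S=\emp$, and $\emp \subsetneq S \subsetneq Q_n$), and the same count of disjoint pairs $(X,Y)$ with $1\le |X|\le |S|$, $1\le |Y|\le n-|S|$, plus one for the empty quotient. The only cosmetic difference is that you count reachable states of $\cD_S$ while the paper counts the possible forms of the quotients $w^{-1}A_S$ directly, but these are the same count under the correspondence the paper sets up just before the proposition.
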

\begin{proof}
Since the number of subsets $S$ of $Q_n$  is $2^n$, there are at most that many atoms.
For atom complexity consider the following three cases:
\be
\item
$S=Q_n$.
Then  $A_{Q_n}= \bigcap_{i \in Q_n} K_i$ is the intersection of all quotients of $L$. 
For $w\in \Sig^*$,  $w^{-1}A_{Q_n}=\bigcap_{i \in X} K_i$, where $1\le |X|\le |Q_n|$. 
Hence there are at most $2^n-1$ quotients of this atom.
\item
$S=\emp$. 
 Now $A_\emp=\bigcap_{i \in Q_n} \ol{K_i}$, and
$w^{-1}A_\emp=\bigcap_{i \in Y} \ol{K_i}$, where 
$1\le |Y|\le |Q_n|$.
As in the first case, there are at most  $2^n-1$ quotients of this atom.
\item
$\emp \subsetneq S \subsetneq Q_n$. Then $A_S = \bigcap_{i \in S} K_i \cap \bigcap_{i \in \ol{S}} \ol{K_i}$.
Every quotient of $A_S$ has the form $w^{-1} A_S = \bigcap_{i \in X} K_i \cap \bigcap_{i \in Y} \ol{K_i},$ where $1\le |X| \le |S|$ and $1\le |Y| \le n-|S|$.
There are two subcases:
	\be
	\item
	If $X\cap Y\neq \emp$, then $w^{-1}A_S=\emp$.
	\item
	If $X\cap Y = \emp$, there are at most 
	$\sum_{x=1}^{|S|}\sum_{y=1}^{n-|S|} \binom{n}{x}\binom{n-x}{y} $
	quotients of $A_S$ of this form. This follows since $\binom{n}{x}$ is the number of ways to choose a set $X \subseteq Q_n$ of size $x$, and once $X$ is fixed, $\binom{n-x}{y}$ is the number of ways to choose a set $Y \subseteq Q_n$ of size $y$ that is \emph{disjoint} from $X$. Taking the sum over the permissible values of $x$ and $y$ gives the formula above.
	\ee
Adding the results of (a) and (b)  we have the required bound. \qed
\ee
\end{proof}

 It was shown in~\cite{Brz13} that the language $L_n$ accepted by the minimal DFA $\cD_n$ of Definition~\ref{def:reg}, also illustrated  in Figure~\ref{fig:reg}, meets all the complexity bounds for common operations on regular languages.   
 \begin{definition}
\label{def:reg}
For $n\ge 2$, let $\cD_n=(Q_n,\Sig,\delta_n, 1, \{n\})$, where 
$Q_n=\{1,\dots,n\}$
is the set of states,
$\Sig=\{a,b,c\}$ is the alphabet, the transition function $\delta_n$ is defined 
by
$a = (1,\dots,n)$,
$b = (1,2)$,
and $c = (n\rightarrow 1)$,   state 1 is the initial state, and $\{n\}$ is the set of final states.
Let $L_n$ be the language accepted by~$\cD_n$.
(If $n=2$, $a$ and $b$ induce the same transformation; hence $\Sig=\{a,c\}$ suffices.)
\end{definition}

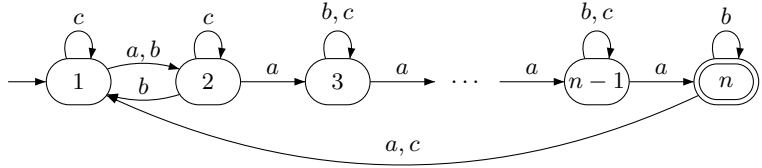
\begin{figure}[th]
\unitlength 7pt
\begin{center}\begin{picture}(37,10)(0,1)
\gasset{Nh=2.5,Nw=3.5,Nmr=1.25,ELdist=0.4,loopdiam=1.5}
\node(1)(1,7){1}\imark(1)
\node(2)(8,7){2}
\node(3)(15,7){3}
\node[Nframe=n](3dots)(22,7){$\dots$}
	{\small
\node(n-1)(29,7){$n-1$}
	}
	{\small
\node(n)(36,7){$n$}\rmark(n)
	}
\drawloop(1){$c$}
\drawedge[curvedepth= 1,ELdist=.1](1,2){$a,b$}
\drawedge[curvedepth= 1,ELdist=-1.2](2,1){$b$}
\drawloop(2){$c$}
\drawedge(2,3){$a$}
\drawloop(3){$b,c$}
\drawedge(3,3dots){$a$}
\drawedge(3dots,n-1){$a$}
\drawloop(n-1){$b,c$}
\drawedge(n-1,n){$a$}
\drawedge[curvedepth= 4.5,ELdist=-1.3](n,1){$a,c$}
\drawloop(n){$b$}
\end{picture}\end{center}
\caption{ DFA  of a regular language whose atoms meet the bounds.}
\label{fig:reg}
\end{figure}

It was proved in \cite{BrTa13} that $L_n$ has $2^n$ atoms, all of which are as complex as possible. We include the proof of this theorem following~\cite{Iva14}. We first prove a general result about distinguishability of states in $\cD_S$, which we will use throughout the paper.

\begin{lemma}[Distinguishability]
\label{lem:dist}
Let $\cD = (Q,\Sig,\delta,q_1,F)$ be a minimal DFA and for $S \subseteq Q$, let $\cD_S = (Q_S,\Sig,\Delta,(S,\ol{S}),F_S)$ be the DFA of the atom $A_S$. Then:
\be
\item
States $(X,Y)$ and $(X',Y')$ of $\cD_S$ are distinguishable 
if $X \ne X'$ and $A_{X},A_{X'}$ are both atoms, or if $Y \ne Y'$ and $A_{\ol{Y}},A_{\ol{Y'}}$ are both atoms.
\item
If one of $A_X$ or $A_{\ol{Y}}$ is an atom, then $(X,Y)$ is distinguishable from $\bot$.
\ee
\end{lemma}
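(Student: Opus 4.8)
The plan is to reduce everything to the underlying DFA $\cD$ via the identity $w\in K_i \iff iw\in F$, which holds because $K_i$ is the right language of state $i$. The first step is to record the following characterization of atoms: $A_Z$ is an atom (that is, non-empty) if and only if there is a word $u$ with $\{i\in Q : iu\in F\}=Z$. Indeed, $u\in A_Z$ means $u\in K_i$ for $i\in Z$ and $u\notin K_j$ for $j\in\ol{Z}$, that is, $iu\in F \iff i\in Z$. The second ingredient is a criterion for acceptance in $\cD_S$: a word $w$ is accepted from $(X,Y)$ exactly when $Xw\subseteq F$ and $Yw\subseteq\ol{F}$. Here one must check that no intermediate crash to $\bot$ occurs, but this is automatic: if $Xw\cap Yw=\emp$ then $Xw'\cap Yw'=\emp$ for every prefix $w'$ of $w$, since a collision on a prefix would propagate to $w$. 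Finally, $\bot$ accepts no word.

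With these tools, Part~1 is a short case analysis. Suppose $X\neq X'$ and $A_X,A_{X'}$ are atoms, and pick $p$ in the symmetric difference of $X$ and $X'$; by symmetry assume $p\in X\setminus X'$. Using that $A_{X'}$ is an atom, choose $u$ with $\{i:iu\in F\}=X'$. Then $X'u\subseteq F$, and since $Y'\subseteq\ol{X'}$ (the pair is disjoint) we get $Y'u\subseteq\ol{F}$, so $(X',Y')$ accepts $u$. On the other hand $p\in X$ but $pu\notin F$, so $Xu\not\subseteq F$ and $(X,Y)$ rejects $u$; hence $u$ distinguishes the two states. The case $p\in X'\setminus X$ is symmetric and uses that $A_X$ is an atom, which is exactly why both hypotheses are needed. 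The $Y$-condition is dual: choosing $q$ in the symmetric difference of $Y$ and $Y'$, say $q\in Y'\setminus Y$, the word $u$ obtained from the atom $A_{\ol{Y}}$ satisfies $\{i:iu\in\ol{F}\}=Y$, whence $(X,Y)$ accepts $u$ (because $Yu\subseteq\ol{F}$ and $Xu\subseteq F$ as $X\subseteq\ol{Y}$) while $(X',Y')$ rejects it because $qu\in F$.

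Part~2 follows immediately from the same machinery: to separate $(X,Y)$ from $\bot$ it suffices to exhibit one word accepted from $(X,Y)$, since $\bot$ accepts nothing. If $A_X$ is an atom, the word $u$ with $\{i:iu\in F\}=X$ gives $Xu\subseteq F$ and, using $Y\subseteq\ol{X}$, also $Yu\subseteq\ol{F}$, so $u$ is accepted; if instead $A_{\ol{Y}}$ is an atom, the symmetric choice works. The only real subtlety throughout is the repeated exploitation of the disjointness $X\cap Y=\emp$: it is what guarantees that fixing one coordinate with an atom word automatically sends the other coordinate to the correct side of $F$. I expect this bookkeeping, together with keeping the $\bot$-crashing argument honest, to be the main thing to get right; there is no deeper obstacle.
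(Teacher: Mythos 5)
Your proof is correct and follows essentially the same route as the paper's: your word $u$ with $\{i : iu\in F\}=Z$ is exactly the paper's witness $w_Z$ for the non-emptiness of the atom $A_Z$, and the case analysis over the symmetric difference of $X,X'$ (resp.\ $Y,Y'$), together with the use of disjointness $X\cap Y=\emp$ to control the other coordinate, mirrors the paper's argument. The only difference is that you explicitly verify that no intermediate transition crashes to $\bot$, a point the paper leaves implicit; this is a small gain in rigor, not a different approach.
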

\begin{proof}
First note that if $A_Z$ is an atom, then the initial state of $\cD_Z$ must be non-empty, so there is a word $w_Z$ such that $(Z,\ol{Z})w_Z = (U,V)$ with $U \subseteq F$, $V \subseteq \ol{F}$, i.e., $(U,V) \in F_S$. In particular, $(X,Y)w_X \in F_S$, since $Y \subseteq \ol{X}$. We also have $(X,Y)w_{\ol{Y}} \in F_S$, since $Y$ is sent to a subset of $\ol{F}$, and $X \subseteq \ol{Y}$ is sent to a subset of $F$. This proves (2): if one of $A_{X}$ or $A_{\ol{Y}}$ is an atom, then one of $w_X$ or $w_{\ol{Y}}$ is in the transition semigroup of $\cD$, and hence $(X,Y)$ can be mapped to a final state but $\bot$ cannot.
Now, we consider the two cases from (1):
\be
\item
$X \ne X'$. Suppose $X' \not\subseteq X$. Then $(X,Y)w_X \in F_S$, but $(X',Y')w_X \not\in F_S$, since $X' \setminus X$ is a non-empty subset of $\ol{X}$ and hence gets mapped outside of $F$. Thus $w_X$ distinguishes these states. If instead we have $X \not\subseteq X'$, then $w_{X'}$ distinguishes the states. Hence if $A_X,A_{X'}$ are atoms, $w_X$ and $w_{X'}$ are in the transition semigroup of $\cD$, and the states are distinguishable.
\item
$Y \ne Y'$. If $Y' \not\subseteq Y$, then $w_{\ol{Y}}$ distinguishes $(X,Y)$ from $(X',Y')$; otherwise, $w_{\ol{Y'}}$ distinguishes the states. As before, if $A_{\ol{Y}},A_{\ol{Y'}}$ are atoms then the states are distinguishable. \qed
\ee
\end{proof}

\begin{theorem}
For $n\ge 2$, the language $L_n$ of Definition~\ref{def:reg} has $2^n$ atoms and each atom meets the bounds of Proposition~\ref{prop:reg}.
\end{theorem}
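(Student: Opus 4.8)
The plan is to exploit the fact that the transformations $a=(1,\dots,n)$, $b=(1,2)$, $c=(n\to 1)$ are precisely the generators named in Proposition~\ref{prop:piccard}. Since $a^n=\tid$, the transition semigroup of $\cD_n$ equals the full transformation monoid $T_n$; so every transformation of $Q_n$ is induced by some word, and I may reason purely in terms of which transformations are available — namely, all of them.

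First I would establish that all $2^n$ atomic intersections are non-empty, which gives the maximal number of atoms. For a fixed $S$ the initial state of $\cD_S$ is $(S,\ol S)$, and a state $(X,Y)$ is accepting precisely when $X\subseteq F=\{n\}$ and $Y\subseteq\ol F$. Since $T_n$ contains the transformation sending every element of $S$ to $n$ and every element of $\ol S$ to $1$ (and simply $(Q_n\to n)$ when $S=Q_n$, or $(Q_n\to 1)$ when $S=\emp$), some word maps $(S,\ol S)$ into $F_S$. Hence $A_S\neq\emp$ for every $S\subseteq Q_n$, so $L_n$ has $2^n$ atoms.

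Next, for each $S$ I would count the reachable, pairwise-distinguishable states of $\cD_S$ and match them to Proposition~\ref{prop:reg}. For reachability, given any target $(X,Y)$ with $X\cap Y=\emp$, $1\le|X|\le|S|$ and $1\le|Y|\le n-|S|$, I would choose $t\in T_n$ mapping $S$ \emph{onto} $X$ and $\ol S$ \emph{onto} $Y$; such a $t$ exists because $|X|\le|S|$, $|Y|\le|\ol S|$ and $X\cap Y=\emp$ allow surjections $S\to X$ and $\ol S\to Y$ with disjoint ranges, consistently defined on the partition $Q_n=S\cup\ol S$. The one delicate point — which I expect to be the main obstacle — is that a word realizing $t$ must reach $(X,Y)$ \emph{without passing through the sink} $\bot$. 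This is settled by the observation that merges are irreversible: if two states coincide after a prefix they coincide after the whole word, so $X\cap Y=St\cap\ol S t=\emp$ forces $Su\cap\ol S u=\emp$ for every prefix $u$. Thus $(X,Y)$ is genuinely reachable, and a collapsing (e.g.\ constant) transformation reaches $\bot$ whenever $S$ and $\ol S$ are both non-empty. This accounts for exactly the states counted in the bound.

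Finally, distinguishability is immediate from \distlemma. Because every $A_Z$ is an atom by the first step, any two distinct reachable states $(X,Y)\neq(X',Y')$ are distinguishable — via $X\neq X'$ when the first coordinates differ, otherwise via $Y\neq Y'$ — and each $(X,Y)$ is distinguishable from $\bot$ since $A_X$ is an atom. Hence the reachable states of $\cD_S$ are pairwise distinguishable, so $\kappa(A_S)$ equals the number counted above and meets the bound. The cases $S=Q_n$ and $S=\emp$ collapse the $Y$- (respectively $X$-) coordinate to $\emp$ throughout, leaving the $2^n-1$ non-empty subsets and no reachable $\bot$, which matches the $2^n-1$ bound.
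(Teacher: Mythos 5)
Your proposal is correct and takes essentially the same approach as the paper: it uses the fact that the transition semigroup of $\cD_n$ is all of $T_n$ to show every atomic intersection $A_S$ is an atom, counts reachable states of $\cD_S$ via transformations mapping $S$ onto $X$ and $\ol{S}$ onto $Y$ (plus a constant transformation for $\bot$), and concludes distinguishability from \distlemma. Your added observation that merges are irreversible — so a word inducing a suitable transformation cannot pass through $\bot$ en route to $(X,Y)$ — merely makes explicit a step the paper leaves implicit; it is a welcome clarification, not a different argument.
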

\begin{proof}
The DFA for the atomic intersection $A_S$ is 
$\cD_S = (Q_S,\Sig,\Delta,(S,\ol{S}),F_S)$, where
 $F_S = \{ (X,Y) \mid  X \subseteq \{n\}, Y \subseteq Q_n\setminus \{n\}  \}$. 
 The transition semigroup of $\cD_n$ consists of all $n^n$ transformations of the state set $Q_n$.
Hence $(S,\ol{S})$ can be mapped to a final state in $F_S$ by taking a transformation that sends $S$ to $\{n\}$ and $\ol{S}$ to $\{1\}$. 
It follows that all $2^n$ atomic intersections $A_S$, $S \subseteq Q_n$ are atoms. By \distlemma, all distinct states in $\cD_S$ are distinguishable. It suffices to prove the number of reachable states in each $\cD_S$ meets the bounds.

 If $S=Q_n$,
 then $A_S$ is represented by $(Q_n,\emp)$, the reachable states of $\cD_S$ are of the form 
$(X,\emp)$, where $X$ is the image of $Q_n$ under some transformation in the transition semigroup.
Since we have all transformations, we can reach all $2^n-1$ states $(X,\emp)$, $\emp\subsetneq X\subseteq Q_n$. For $S = \emp$ a similar argument works.

If $\emp\subsetneq S\subsetneq Q_n$, then for any state $(X,Y)$ with $1\le X \le |S|$, $1\le Y \le n-|S|$ and $X\cap Y=\emp$, we can find a transformation mapping $S$ onto $X$ and $\ol{S}$ onto $Y$.
So all these states are reachable, and there are $\sum_{x=1}^{|S|}\sum_{y=1}^{n-|S|} \binom{n}{x}\binom{n-x}{y}$ of them.
In addition, $\bot$ is reachable from $(S,\ol{S})$ by the constant transformation $(Q_n \rightarrow 1)$ and so  the bound is met.
\qed
\end{proof}

\section{Complexity of Atoms in Right Ideals}

If $L$ is a right ideal, one of its quotients is $\Sig^*$; by convention we assume that $K_n=\Sig^*$. In any atom $A_S$ the quotient $K_n$ must be uncomplemented, that is, we must have $n\in S$. Thus $A_\emp$ is not an atom.
The results of this section were stated in~\cite{BrDa14} without proof; for completeness we include the proofs.
\begin{proposition}
\label{prop:bounds_right}
Suppose $L$ is a right ideal with $n\ge 1$ quotients. Then $L$ has at most $2^{n-1}$ atoms.
The  complexity $\kappa(A_S)$ of atom $A_S$ satisfies

\begin{equation}
	\kappa(A_S)\le
	\begin{cases}
	2^{n-1}, 	&\text{if $S=Q_n$;}\\
	1+ \sum_{x=1}^{|S|}\sum_{y=1}^{n-|S|} \binom{n-1}{x-1}\binom{n-x}{y},
			&\text{if $\emp \subsetneq S \subsetneq Q_n$.}
	\end{cases}
\end{equation}
\end{proposition}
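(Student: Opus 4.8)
The plan is to mimic the proof of Proposition~\ref{prop:reg}, counting the reachable states of the atom automaton $\cD_S=(Q_S,\Sig,\Delta,(S,\ol S),F_S)$, but exploiting the special structure forced by $K_n=\Sig^*$. The single crucial observation is that, since $a^{-1}K_n=a^{-1}\Sig^*=\Sig^*=K_n$ for every $a\in\Sig$, state $n$ is a fixed point of the transition monoid: $n\delta_w=n$ for all $w\in\Sig^*$. Consequently, for any set $Z$ with $n\in Z$ and any word $w$, the image $Zw$ again contains $n$. Combined with the fact (noted just before the statement) that every atom $A_S$ has $n\in S$, this pins $n$ into the first coordinate $X$ of every reachable non-$\bot$ state $(X,Y)=(Sw,\ol Sw)$, which is precisely what sharpens the binomial count.

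For the count of atoms, since every basis $S$ must satisfy $n\in S$, there are at most $2^{n-1}$ admissible subsets of $Q_n$, hence at most $2^{n-1}$ atoms. For the case $S=Q_n$ the initial state is $(Q_n,\emp)$; because the second coordinate starts as $\emp$ it remains $\emp$ under every transition, so $\bot$ is never reached and each reachable state has the form $(X,\emp)$ with $X=Q_nw$. By the fixed-point observation $n\in X$, so $X$ ranges over subsets of $Q_n$ containing $n$, of which there are $2^{n-1}$; this yields $\kappa(A_{Q_n})\le 2^{n-1}$.

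For $\emp\subsetneq S\subsetneq Q_n$ I would argue exactly as in case~(3) of Proposition~\ref{prop:reg}, with one modification. Every reachable non-$\bot$ state is $(X,Y)$ with $X=Sw$, $Y=\ol Sw$ and $X\cap Y=\emp$; here $1\le|X|\le|S|$ and $1\le|Y|\le n-|S|$, the lower bounds holding because $S$ and $\ol S$ are both non-empty and a transformation (a total function) cannot map a non-empty set to $\emp$. The difference from the regular case is that $n\in X$ always. Hence, for fixed sizes $|X|=x$ and $|Y|=y$, the number of choices of $X$ is $\binom{n-1}{x-1}$ (fix $n\in X$, then choose the remaining $x-1$ elements from $Q_n\setminus\{n\}$) rather than $\binom{n}{x}$, while $Y$ is still an arbitrary $y$-subset of the $n-x$ states outside $X$, giving $\binom{n-x}{y}$ choices. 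Summing over $1\le x\le|S|$ and $1\le y\le n-|S|$ and adding $1$ for the possible sink state $\bot$ gives the stated bound.

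The only real subtlety — a matter of care rather than difficulty — is keeping the ranges of $x$ and $y$ correct: one must verify that $Y$ is never empty (so the inner sum starts at $y=1$), which uses $\ol S\neq\emp$ together with the non-emptiness of images, and that the constraint $n\in X$ converts the factor $\binom{n}{x}$ into $\binom{n-1}{x-1}$ while leaving $\binom{n-x}{y}$ unchanged. No distinguishability argument is needed here, since we require only an upper bound; counting the reachable states of $\cD_S$ suffices.
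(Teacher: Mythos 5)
Your proposal is correct and follows essentially the same route as the paper: the paper also identifies each quotient $w^{-1}A_S$ with a pair $(X,Y)$, uses the fact that $w^{-1}K_n=w^{-1}\Sig^*=\Sig^*$ forces $n\in X$ (your fixed-point observation in automaton form), and performs the identical $\binom{n-1}{x-1}\binom{n-x}{y}$ count plus one for the empty quotient. Your phrasing in terms of reachable states of $\cD_S$ rather than quotients of the language is only a cosmetic difference, since these are in the correspondence the paper sets up before this proposition.
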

\begin{proof}
Let $A_S$ be an atom.
Since $w^{-1}\Sig^*=\Sig^*$ for all $w\in\Sig^*$, $w^{-1}A_S$ always has $K_n$ uncomplemented; so if $(X,Y)$ corresponds to $w^{-1}A_S$, then $n\in X$.
Since the number of subsets $S$ of $Q_n$ containing $n$ is $2^{n-1}$, there are at most that many atoms. Consider two cases:
\be
\item
$S=Q_n$.
Then  $w^{-1}L=\bigcap_{i \in X} K_i$, and each such quotient of $A_S$ is represented by $(X,\emp)$, where $1\le |X|\le n$.
Since $n$ is always in $X$, there are at most $2^{n-1}$ quotients of this atom.
\item
$\emp \subsetneq S \subsetneq Q_n$. Then  $w^{-1} A_S = \bigcap_{i \in X} K_i \cap \bigcap_{i \in Y} \ol{K_i},$ where $1\le |X| \le |S|$ and $1\le |Y| \le n-|S|$.
We know that if $X\cap Y\neq \emp$, then $w^{-1}A_S=\emp$. 
Thus we are looking for pairs $(X,Y)$ such that $n\in X$ and $X\cap Y=\emp$.
To get $X$ we take $n$ and choose $|X|-1$ elements from $Q_n\setminus \{n\}$, and then to get $Y$ take $|Y|$ elements from $Q_n\setminus X$.
The number of such pairs is
$ \sum_{x=1}^{|S|}\sum_{y=1}^{n-|S|} \binom{n-1}{x-1}\binom{n-x}{y}$.
Adding the empty quotient we have our bound. \qed
\ee

\end{proof}

For $n=1$, $L=\Sig^*$ is a right ideal with one atom of complexity 1. 
For $n=2$, $L=aa^*$ is a right ideal with two atoms $L$ and $\ol{L}$ of complexity 2.
It was shown in~\cite{BrDa14} that the language of the DFA of Definition~\ref{def:rideal} is most complex in the sense that it meets all the bounds for common operations, but no proof of atom complexity was given. We include this proof here.

\begin{definition}
\label{def:rideal}
For $n\ge 3$, let $\cD_n=(Q_n,\Sig,\delta_n, 1, \{n\})$, where 
$\Sig=\{a,b,c,d\}$, 
and $\delta_n$ is defined by
$a= (1,\dots,n-1)$,
$b=(2,\ldots,n-1)$,
$c=(n-1\rightarrow 1)$
and $d=(n-1\rightarrow n)$.
Let $L_n$ be the language accepted by~$\cD_n$.
If $n=3$, $b$ is not needed; hence $\Sig=\{a,c,d\}$ suffices.
Also, let $L_2=aa^*$ and $L_1=a^*$. 
\end{definition}

\begin{figure}[ht]
\unitlength 7pt
\begin{center}\begin{picture}(42,10)(0,0)
\gasset{Nh=2.5,Nw=3.5,Nmr=1.25,ELdist=0.4,loopdiam=1.5}
\node(1)(1,7){1}\imark(1)
\node(2)(8,7){2}
\node(3)(15,7){3}
\node[Nframe=n](3dots)(22,7){$\dots$}
	{\small
\node(n-2)(29,7){$n-2$}
	}
	{\small
\node(n-1)(36,7){$n-1$}
	}
	{\small
\node(n)(43,7){$n$}\rmark(n)
	}
\drawloop(1){$b,c,d$}
\drawedge(1,2){$a$}
\drawloop(2){$c,d$}
\drawedge(2,3){$a,b$}
\drawloop(3){$c,d$}
\drawedge(3,3dots){$a,b$}
\drawedge(3dots,n-2){$a,b$}
\drawedge(n-2,n-1){$a,b$}
\drawedge(n-1,n){$d$}
\drawedge[curvedepth= 3,ELdist=-1.5](n-1,2){$b$}
\drawedge[curvedepth= 5,ELdist=-1.2](n-1,1){$a,c$}
\drawloop(n-2){$c,d$}
\drawloop(n){$a,b,c,d$}
\end{picture}\end{center}
\caption{DFA  of  a right ideal  whose atoms meet the bounds.}
\label{fig:RightIdeal}
\end{figure}
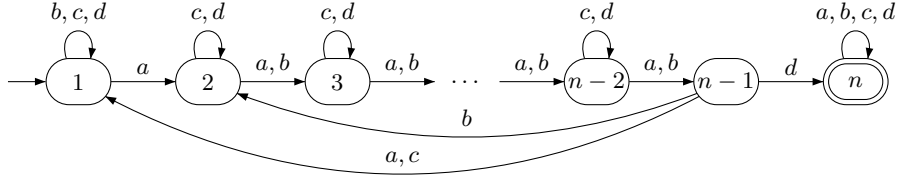

\begin{theorem}
For $n\ge 1$, the language $L_n$ of Definition~\ref{def:rideal} is a right ideal that has $2^{n-1}$ atoms and each atom meets the bounds of Proposition~\ref{prop:bounds_right}.
\end{theorem}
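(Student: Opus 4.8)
The plan is to first pin down the transition semigroup of $\cD_n$, and then read off both the number of atoms and the complexity of each atom from it together with \distlemma. To begin, $L_n$ is a right ideal: state $n$ is a sink (every letter maps $n$ to $n$) and is the unique final state, so once a word drives $\cD_n$ into $n$ it stays there, giving $L_n\Sig^*=L_n$; in particular every transformation in the transition semigroup fixes $n$.

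The key step is to show the transition semigroup is \emph{all} $n^{n-1}$ transformations of $Q_n$ that fix $n$. By Proposition~\ref{prop:piccard}, applied to $Q_n\setminus\{n\}=\{1,\dots,n-1\}$, the restrictions of $a=(1,\dots,n-1)$, $b=(2,\dots,n-1)$ and $c=(n-1\to 1)$ generate the full transformation monoid on $\{1,\dots,n-1\}$; extended by fixing $n$, this is every transformation that fixes $n$ and maps $\{1,\dots,n-1\}$ into itself. To reach $n$ I would use $d=(n-1\to n)$: conjugating $d$ by the transposition $(i,n-1)$ yields $(i\to n)$ for each $i$, and composing these gives, for any $P\subseteq\{1,\dots,n-1\}$, the map $e_P$ sending $P$ to $n$ and fixing everything else. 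Finally, an arbitrary $t$ fixing $n$ factors as $t=e_P\cdot s$, where $P=\{j:jt=n\}$ and $s$ agrees with $t$ off $P$ and stays within $\{1,\dots,n-1\}$; this exhibits $t$ in the semigroup.

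With the semigroup in hand the rest is bookkeeping. For the atom count, $A_S$ is an atom iff $(S,\ol S)$ reaches a final state of $\cD_S$, i.e. iff some transformation fixing $n$ sends $S$ onto $\{n\}$ and $\ol S$ into $\{1,\dots,n-1\}$, which is possible exactly when $n\in S$; so all $2^{n-1}$ such $A_S$ are atoms, matching the upper bound of Proposition~\ref{prop:bounds_right}. Every reachable state $(X,Y)$ of $\cD_S$ has $n\in X$ (hence $n\notin Y$), so $A_X$ and $A_{\ol Y}$ are always atoms; \distlemma\ then makes all distinct reachable states, and $\bot$, pairwise distinguishable, so $\kappa(A_S)$ equals the number of reachable states. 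For $S=Q_n$ the reachable states are exactly the $(X,\emp)$ with $n\in X$, of which there are $2^{n-1}$, and $\bot$ is unreachable because the second coordinate stays empty, meeting the bound. For $\emp\subsetneq S\subsetneq Q_n$ I would show every pair $(X,Y)$ with $n\in X$, $X\cap Y=\emp$, $1\le|X|\le|S|$ and $1\le|Y|\le n-|S|$ is reachable by mapping $S$ onto $X$ and $\ol S$ onto $Y$ (giving the double-sum count), and that $\bot$ is reachable (apply $c$ at $(\{n,n-1\},\{1\})$ when $|S|\ge 2$, and apply $d$ at $(\{n\},\{1,\dots,n-1\})$ when $|S|=1$), contributing the extra $+1$.

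The main obstacle I expect is the semigroup characterization itself—in particular the conjugation producing each $(i\to n)$ and the factorization $t=e_P\cdot s$—since once the transition semigroup is known to consist of every map fixing $n$, both the reachability of all required $(X,Y)$ states and the distinguishability argument via \distlemma\ are routine.
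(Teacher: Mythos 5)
Your proposal is correct and follows essentially the same route as the paper's proof: show that the transition semigroup of $\cD_n$ is the set of all transformations of $Q_n$ that fix $n$ (the paper asserts this directly from Proposition~\ref{prop:piccard} with $d$ added, while you fill in the conjugation/factorization details), then obtain the atom count and reachability by choosing suitable transformations fixing $n$, and get distinguishability from \distlemma{} exactly as the paper does. The only omission is the cases $n=1,2$, where Definition~\ref{def:rideal} gives $a^*$ and $aa^*$ and the four-letter generator argument does not apply; the paper dispatches these with the remark that they are easily verified directly.
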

\begin{proof}
The cases $n<3$ are easily verified; hence assume $n\ge 3$.
By Proposition \ref{prop:piccard}, the transformations $\{a,b,c\}$ restricted to $Q_{n-1}$ generate all transformations of $Q_{n-1}$. When $d$ is included, we get all transformations of $Q_n$ that fix $n$. 
For $S \subseteq Q_n$, $n \in S$, consider the DFA $\cD_S$, which has initial state $(S,\ol{S})$. There is a transformation of $Q_n$ fixing $n$ that sends $(S,\ol{S})$ to the final state $(\{n\},\{1\})$. Hence $A_S$ is an atom if $n \in S$, and so $L_n$ has $2^{n-1}$ atoms.

We now count reachable and distinguishable states in the DFA of each atom. Suppose $S = Q_n$. The initial state of $\cD_S$ is $(Q_n,\emp)$; by transformations that fix $n$, we can reach any state $(X,\emp)$ with $\{n\} \subseteq X \subseteq Q_n$. There are $2^{n-1}$ such states, and since $A_X$ is an atom if $n \in X$, all of them are distinguishable by \distlemma.

Suppose $\emp \subsetneq S \subsetneq Q_n$. From the initial state $(S,\ol{S})$, by transformations that fix $n$ we can reach any $(X,Y)$ with $1 \le |X| \le |S|$, $1 \le |Y| \le n-|S|$, $n \in X$ and $X \cap Y = \emp$. 
There are $\sum_{x=1}^{|S|}\sum_{y=1}^{n-|S|} \binom{n-1}{x-1}\binom{n-x}{y}$ such states. For all such states $(X,Y)$, we have $n \in X$ and $n \in \ol{Y}$, so $A_X$ and $A_{\ol{Y}}$ are both atoms; hence by \distlemma, all of these states are distinguishable from each other and from $\bot$.
The state $\bot$ is also reachable by the constant transformation $(Q_n \to n)$, and so  the bound is met.
\qed
\end{proof}

\section{Complexity of Atoms in Left Ideals}
\label{sec:left}

If $L$ is a left ideal, then $L=\Sig^*L$, and $w^{-1}L$ contains $L$ for every $w\in \Sig^*$. By convention we let $L=K_1$.

\label{sec:left_bounds}
\begin{proposition}
\label{prop:bounds_left}
Suppose $L$ is a left ideal with $n\ge 2$ quotients. Then $L$ has at most $2^{n-1}+1$ atoms. The  complexity $\kappa(A_S)$ of atom $A_S$ satisfies
      
\begin{equation}
	\kappa(A_S) 
	\begin{cases}
		= n, 			& \text{if $S=Q_n$;}\\
		\le 2^{n-1},		& \text{if $S=\emp$;}\\
		 \le 1 + \sum_{x=1}^{|S|}\sum_{y=1}^{n-|S|}\binom{n-1}{x}\binom{n-x-1}{y-1},
		 			& \text{otherwise.}
		\end{cases}
\end{equation}
\end{proposition}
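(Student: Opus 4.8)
The plan is to exploit the single structural feature of a left ideal, namely that $K_1=L$ is contained in every quotient, so that $K_1\subseteq K_i$ and dually $\ol{K_i}\subseteq\ol{K_1}$ for all $i$. First I would use this to bound the number of atoms and to settle the case $S=Q_n$. If $1\in S$ but $S\neq Q_n$, then any word in $A_S$ lies in $K_1$, hence in every $K_i$, and so cannot lie in $\ol{K_j}$ for any $j\in\ol{S}$; thus $A_S=\emp$ and is not an atom. Therefore every atom has either $S=Q_n$ or $1\in\ol{S}$, giving at most $2^{n-1}+1$ atoms. For $S=Q_n$ the containment gives $A_{Q_n}=\bigcap_{i\in Q_n}K_i=K_1=L$, so $\kappa(A_{Q_n})=n$ exactly.

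Next I would treat $S=\emp$. Here the initial state of $\cD_S$ is $(\emp,Q_n)$, the first component stays empty under every letter, so a quotient has the form $\bigcap_{i\in Y}\ol{K_i}$ with $Y=Q_nw$. The key move is absorption: since $\ol{K_i}\subseteq\ol{K_1}$, for any nonempty $Y$ we have $\bigcap_{i\in Y}\ol{K_i}=\bigcap_{i\in Y\cup\{1\}}\ol{K_i}$, so every quotient can be written with $1\in Y$. As $Y\subseteq Q_n$ with $1\in Y$, there are at most $2^{n-1}$ such sets, giving $\kappa(A_\emp)\le 2^{n-1}$. Note that here no size difficulty arises, because the admissible sets $Y$ range over all subsets of $Q_n$ containing $1$, of size up to $n$.

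For the remaining case $\emp\subsetneq S\subsetneq Q_n$ I would argue as follows. A quotient $w^{-1}A_S$ corresponds to the pair $(X,Y)=(Sw,\ol{S}w)$; since $1\in\ol{S}$, we always have $1w\in Y$, so $Y\neq\emp$. If $1\in X$ then the uncomplemented part collapses to $\bigcap_{i\in X}K_i=K_1=L$, which is disjoint from every $\ol{K_j}$ with $j\in Y\neq\emp$; hence the quotient is empty and the corresponding state is indistinguishable from $\bot$. So every nonempty quotient has $1\notin X$, and applying absorption to the complemented part we may write it with $1\in Y$. Counting pairs with $1\notin X$, $1\in Y$, $X\cap Y=\emp$, $1\le|X|\le|S|$ and $1\le|Y|\le n-|S|$ gives $\sum_{x=1}^{|S|}\sum_{y=1}^{n-|S|}\binom{n-1}{x}\binom{n-x-1}{y-1}$ (choose $X$ from $Q_n\setminus\{1\}$, then the remaining $|Y|-1$ elements of $Y$ from $Q_n\setminus X\setminus\{1\}$); adding the empty quotient $\bot$ yields the stated bound. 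Distinguishability of the surviving states would follow from \distlemma, using that $1\notin X$ and $1\in Y$ make $A_X$ and $A_{\ol{Y}}$ available as atoms.

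The step I expect to be the main obstacle is justifying the range $|Y|\le n-|S|$ once we force $1\in Y$. The natural representative $(Sw,\ol{S}w\cup\{1\})$ has complemented part of size $|\ol{S}w|+1$, and when $1\notin\ol{S}w$ this can reach $n-|S|+1$, one more than the count permits. Unlike the right-ideal case, where the absorbing state $n$ is a fixed point and so always lies in the image, state $1$ need \emph{not} lie in $Q_nw$ for a left ideal, so these ``overflow'' quotients genuinely occur among reachable states and are distinct from the ones counted above. Reconciling this with the stated formula therefore needs more than pure form-counting. I would resolve it with a reachability or charging argument: the overflow quotients that exceed the range force $|Sw|\le|S|-1$ (since $1\notin Q_nw$ makes $|Q_nw|\le n-1$), leaving room to inject each such quotient into a counted configuration by moving one element of $\ol{S}w$ into $X$. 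Making this map injective and disjoint from the genuinely-counted quotients is the delicate point, and I would alternatively argue directly on the reachable states of $\cD_S$ rather than on abstract pairs.
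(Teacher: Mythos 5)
Your main-line argument coincides with the paper's own proof: the same characterization of which $A_S$ can be atoms, the exact value $n$ for $S=Q_n$, the absorption law $\ol{K_1}\cap\ol{K_i}=\ol{K_i}$ for $S=\emp$, and in the last case the same count of pairs $(X,Y)$ with $1\notin X$, $1\in Y$ (your derivation that $1\in X$ forces an empty quotient, via $\bigcap_{i\in X}K_i=K_1=L$ being disjoint from any $\ol{K_j}$ with $j\in Y\ne\emp$, is even a little more direct than the paper's). The important point is the obstacle you flag at the end: it is genuine, and it is exactly the step the paper glosses over with ``we can assume without loss of generality that $1\in Y$'' while still summing only up to $|Y|\le n-|S|$. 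Writing $M_{X,Y}=\bigcap_{i\in X}K_i\cap\bigcap_{i\in Y}\ol{K_i}$, a concrete witness is the unary left ideal $L=aaa^*$: here $n=3$, $K_1=aaa^*\subsetneq K_2=aa^*\subsetneq K_3=a^*$, and for $S=\{2,3\}$ we have $A_S=\{a\}$ and $n-|S|=1$. The quotient $a^{-1}A_S=\{\eps\}=K_3\cap\ol{K_2}$ corresponds to the pair $(\{3\},\{2\})$; its canonical form $(\{3\},\{1,2\})$ has $|Y|=2>n-|S|$, and its language equals none of the counted pairs' languages, since $M_{\{2\},\{1\}}=M_{\{2,3\},\{1\}}=\{a\}$ and $M_{\{3\},\{1\}}=\{\eps,a\}$. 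So overflow quotients really occur and really escape the counted set; the stated bound survives in this example only because most counted pairs do not arise as quotients ($\kappa(A_S)=3\le 4$).

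However, your proposal does not close this gap, and the specific repair you sketch fails on that very example: moving one element of $\ol{S}w$ into $X$ sends the overflow pair $(\{3\},\{2\})$ to $(\{2,3\},\{1\})$, which is the initial quotient $A_S$ itself and hence not available to be charged. The mechanism that actually makes the trade-off work is a collapse forced by the left-ideal law. Let $T_x=\{i\mid x\in K_i\}$, so that for a left ideal $1\in T_x$ implies $T_x=Q_n$. If $w$ witnesses an overflow quotient, with $j=1w$ and $1\notin Q_nw$, then for any word $x$ in a nonempty atom $A_T$ with $T\ne Q_n$ and $j\in T$ one gets $T_{wx}=\{i\mid iw\in T\}\ni 1$, hence $Q_nw\subseteq T$. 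Consequently every counted pair $(X',Y')$ with $j\in X'$ satisfies $M_{X',Y'}=M_{X'\cup Q_nw,\,Y'}$, i.e.\ whole families of counted pairs coalesce into one language, and one must show these savings pay for all overflow quotients (this is the analogue of the successor analysis the paper carries out for two-sided ideals in the case $S=Q_n\setminus\{1\}$). Until that injection is constructed, your proof is incomplete at exactly the step you yourself flag; to your credit, the paper's published proof is incomplete at the same step, so the extra argument is needed in any case.
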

\begin{proof}
Consider the atomic intersections $A_S$ such that $1\in S$; then $\bigcap_{i\in S} K_i=L$ (since every quotient contains $L$), and there are two possibilities:
Either $S=Q_n$, in which case  $A_S=A_{Q_n}=\bigcap_{i\in Q_n} K_i=L$, or there is at least one quotient, say $K_i$ which is complemented. Since $K_i$ contains $L$, it can be expressed as $K_i=L\cup M_i$, where $L\cap M_i=\emp$. Then the intersection has the term $L\cap \ol{(L\cup M_i)}=\emp$, and $A_S$ is not an atom.
Thus for $A_S$ to be an atom, either  $1 \not \in S$ or $S=Q_n$. Hence there are at most $2^{n-1}+1$ atoms.

For atom complexity, consider the following cases:
\be
\item
$S=Q_n$.
Then  $A_{Q_n}=L$, and the  complexity of $A_{Q_n}$ is precisely $n$. 
\item
$S=\emp$. 
 Now  $A_\emp=\bigcap_{i \in Q_n} \ol{K_i}$, and
every quotient of $A_\emp$ is an intersection $\bigcap_{i \in Y} \ol{K_i}$, where 
$1\le |Y|\le |Q_n|$.
There are $2^n-1$ such intersections, but
consider any quotient $K_i\neq L$ of a left ideal; it can be expressed as $K_i=L\cup M_i$, where $L\cap M_i=\emp$. 
We have 
$$\ol{K_1}\cap \ol{K_i}=\ol{L} \cap \ol{L\cup M_i} = \ol{L}\cap \ol{L}\cap \ol{M_i}= \ol{L}\cap \ol{M_i}=\ol{K_i}.$$
Thus every  intersection $\bigcap_{i \in Y} \ol{K_i}$ which has $Y\neq\emp$ and does not have $\ol{K_1}$ as a term defines the same language as $\ol{K_1}\cap \bigcap_{i \in Y} \ol{K_i}$. 
There are $2^{n-1}-1$ such intersections. Adding 1 for the intersection which just has the single term $\ol{K_1}$, we get our bound $2^{n-1}$.

\item
$\emp \subsetneq S \subsetneq Q_n$. Then $A_S = \bigcap_{i \in S} K_i \cap \bigcap_{i \in \ol{S}} \ol{K_i}$, where neither $S$ nor $\ol{S}$ is empty.
If $1\in S$  this intersection  is empty, and so is not an atom.
Assume from now on that $1\not\in S$.
Every quotient of $A_S$ has the form $w^{-1} A_S = \bigcap_{i \in X} K_i \cap \bigcap_{i \in Y} \ol{K_i},$ where $1\le |X| \le |S|$ and $1\le |Y| \le n-|S|$.

	\be
	\item
	$1\in X$. We claim that $w^{-1} A_S=\emp$ for all $w \in \Sig^*$. For suppose that there is a term 		$K_i$, $i\in S$,  and a word $w\in \Sig^*$ such that $w^{-1}K_i=K_1$.
	Since $K_1\subseteq K_i$, we have $w^{-1}K_1 \subseteq w^{-1} K_i=K_1$.
	Since also $K_1\subseteq w^{-1}K_1$ because $L$ is a left ideal,  we have
	$w^{-1}K_1=K_1$. 
	But  $1\in \ol{S}$, so $w^{-1}\left(\bigcap_{i\in \ol{S}} \ol{K_i}\right) = \bigcap_{i \in Y} \ol{K_i}$ has $w^{-1}\ol{K_1} = \ol{K_1}$ as a term.
   Thus $1 \in Y$, which means $X \cap Y \neq \emp$.
	Hence  $w^{-1} A_S=\emp$.
	\item
	$1\not\in X$.
	We are looking for pairs $(X,Y)$ such that $X\cap Y=\emp$.
As we argued in (2), $\ol{K_1} \cap \ol{K_i} = \ol{K_i}$ for each $i$, so we can assume without loss of generality that $1 \in Y$.
	To get $X$ we choose $|X|$ elements from $Q_n\setminus\{1\}$ and to get $Y$ 	we take $\{1\}$ and choose $|Y|-1$ elements from $(Q_n\setminus X)\setminus 	\{1\}$.
	The number of such pairs is
	$ \sum_{x=1}^{|S|}\sum_{y=1}^{n-|S|}\binom{n-1}{x}\binom{n-x-1}{y-1}.$

		\ee
		Adding 1 for the empty quotient we have our bound. \qed
\ee
\end{proof}

Next we compare the bounds for left ideals with those for right ideals. To calculate the number of pairs $(X,Y)$ such that $n \in X$ and $X\cap Y=\emp$ for right ideals, we can first choose $Y$ from $Q_n\setminus \{n\}$ and then take $n$ and choose $|X|-1$ elements from $(Q_n\setminus Y)\setminus \{n\}$. The number of such pairs is
\[ 1 + \sum_{y=1}^{n-|S|}\sum_{x=1}^{|S|}\binom{n-1}{y}\binom{n-y-1}{x-1}.\]
If we interchange $x$ and $y$ we note that this is precisely the number of pairs $(X,Y)$ such that $1\in Y$ and $X\cap Y=\emp$ for an atom of a left ideal with a basis of size $n-|S|$.
Thus we have
\begin{remark}
\label{rem:symmetry}
Let $R$ be a right ideal of complexity $n$ and let $A_S$ be an atom of $R$, where $\emp \subsetneq S \subsetneq Q_n$.
Let $L$ be a left ideal of complexity $n$ and let $A'_{\ol{S}}$ be an atom of $L$. The upper bounds on the complexities of $A_S$ and $A'_{\ol{S}}$ are equal.
\end{remark}

Now we consider the question of tightness of the bounds in Proposition~\ref{prop:bounds_left}.
For $n=1$, $L=\Sig^*$ is a left ideal with one atom of complexity 1; so the bound of Proposition~\ref{prop:bounds_left} does not hold. 

The DFA of Definition~\ref{def:lideal} and Figure~\ref{fig:lideal} was introduced in~\cite{BrYe11}.
It was shown in~\cite{BrSz14} that the language of this DFA has the largest syntactic semigroup among left ideals of complexity $n$.
Moreover, it was shown in~\cite{BrLiu15} that this language also meets the bounds on the quotient complexity of boolean operations, concatenation and star. Together with our result about the number of atoms and their complexity, this shows that this language is the most complex left ideal.

\begin{definition}
\label{def:lideal}
For $n\ge 3$, let $\cD_n=(Q_n,\Sig,\delta_n, 1, \{n\})$, where 
$\Sig=\{a,b,c,d,e\}$, 
and $\delta_n$ is defined by
$a = (2,\dots,n)$,
$b =(2,3)$,
$c = (n\to 2)$,
$d = (n\to 1)$,
and $e=(Q_n\to 2)$.
If $n=3$, inputs $a$ and $b$ coincide; hence $\Sig=\{a,c,d,e\}$ suffices.
Also, let $\cD_2=(Q_2,\{a,b,c\},\delta_2, 1, \{2\})$, where 
$a = \one$,
$b = (Q_2\to 2)$,
$c = (Q_2\to 1)$. 
Let $L_n$ be the language accepted by~$\cD_n$; we have $L_2=\Sig^*b(a\cup b)^*$.
\end{definition}

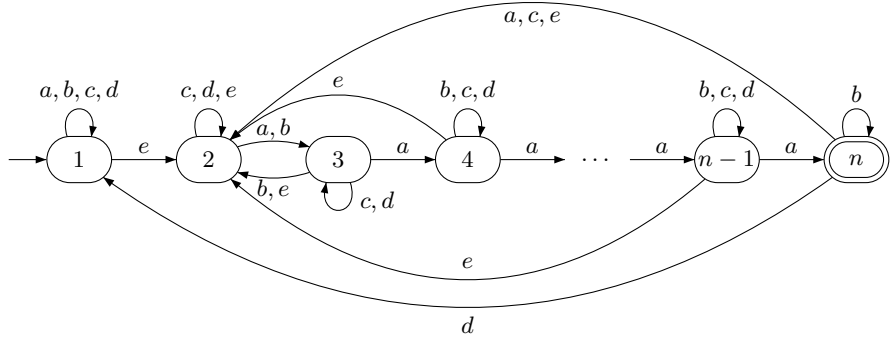
\begin{figure}[h]
\unitlength 7pt
\begin{center}\begin{picture}(43,17)(0,-1)
\gasset{Nh=2.5,Nw=3.5,Nmr=1.25,ELdist=0.4,loopdiam=1.5}
\node(1)(1,7){1}\imark(1)
\node(2)(8,7){2}
\node(3)(15,7){3}
\node(4)(22,7){4}
\node[Nframe=n](4dots)(29,7){$\dots$}
	{\small
\node(n-1)(36,7){$n-1$}
	}
	{\small
\node(n)(43,7){$n$}\rmark(n)
	}
\drawedge(1,2){$e$}
\drawloop(1){$a,b,c,d$}
\drawloop(2){$c,d,e$}
\drawedge[curvedepth= 1,ELdist=.1](2,3){$a,b$}
\drawedge[curvedepth= 1,ELdist=.1](3,2){$b,e$}
\drawloop[loopangle=270,ELpos=25](3){$c,d$}
\drawedge(3,4){$a$}
\drawedge[curvedepth= -3.5,ELdist=-1](4,2){$e$}
\drawedge(4,4dots){$a$}
\drawedge(4dots,n-1){$a$}
\drawloop(4){$b,c,d$}
\drawloop(n-1){$b,c,d$}
\drawedge(n-1,n){$a$}
\drawedge[curvedepth= 6.5,ELdist=-1.2](n-1,2){$e$}
\drawedge[curvedepth= -8.5,ELdist=.5](n,2){$a,c,e$}
\drawedge[curvedepth= 8.0,ELdist=.5](n,1){$d$}
\drawloop(n){$b$}
\end{picture}\end{center}
\caption{DFA  of a left ideal whose atoms meet the bounds.}
\label{fig:lideal}
\end{figure}

\begin{theorem}
For $n\ge 2$, the language $L_n$ of Definition~\ref{def:lideal} is a left ideal that has $2^{n-1}+1$ atoms and each atom meets the bounds of  Proposition~\ref{prop:bounds_left}.
\end{theorem}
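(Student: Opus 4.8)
The plan is to follow the template of the two preceding theorems: pin down enough of the transition semigroup of $\cD_n$, use it to show that all candidate atomic intersections are non-empty (hence atoms), and then, for each basis $S$, count the reachable states of $\cD_S$ and invoke \distlemma{} to conclude they are pairwise distinguishable, matching the upper bounds of Proposition~\ref{prop:bounds_left}.

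First I would verify that $L_n$ is a left ideal. The key point is that $e=(Q_n\to 2)$ is a reset to state $2$ while $a,b,c,d$ are self-loops at state $1$, so the only way to leave state $1$ is via $e$. Hence every word accepted from state $1$ contains an occurrence of $e$; writing such a word as $v_1 e v_2$ with $v_2$ free of $e$, we get $\delta(i,v_1 e v_2)=\delta(2,v_2)=n$ for \emph{every} state $i$, so every word of $L=K_1$ lies in every quotient, i.e.\ $L\subseteq w^{-1}L$ for all $w$. Next I would describe the semigroup: by Proposition~\ref{prop:piccard}, the restrictions of $a,b,c$ generate all transformations of $\{2,\dots,n\}$, and since $a,b,c$ each fix $1$, the semigroup contains every transformation of $Q_n$ that fixes $1$; together with $d=(n\to 1)$ and $e$ this supplies all the transformations the counting needs.

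For the count of atoms, if $1\notin S$ (including $S=\emp$) a transformation fixing $1$ sends $S\subseteq\{2,\dots,n\}$ onto $\{n\}$ and $\ol S$ into $Q_n\setminus\{n\}$, so $(S,\ol S)$ reaches a final state of $\cD_S$ and $A_S$ is an atom; with $A_{Q_n}=L$ this yields exactly $2^{n-1}+1$ atoms. For the complexities I would treat the three cases of Proposition~\ref{prop:bounds_left}. The case $S=Q_n$ is immediate, as $A_{Q_n}=L$ has complexity $n$. For $S=\emp$, from $(\emp,Q_n)$ the transformations fixing $1$ realize every $(\emp,\{1\}\cup T)$ with $\emp\neq T\subseteq\{2,\dots,n\}$, and a constant map to $1$ (composing $e$, a routing of $2$ to $n$, and $d$) realizes $(\emp,\{1\})$, giving $2^{n-1}$ states, each with $1\in Y$ so that $A_{\ol Y}$ is an atom and \distlemma{} separates them. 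For $\emp\subsetneq S\subsetneq Q_n$ with $1\notin S$, I would reach every $(X,Y)$ with $X\subseteq\{2,\dots,n\}$, $1\in Y$, $X\cap Y=\emp$, $1\le|X|\le|S|$, $1\le|Y|\le n-|S|$: when $Y\setminus\{1\}\neq\emp$ a single transformation fixing $1$ maps $S$ onto $X$ and $\ol S\setminus\{1\}$ onto $Y\setminus\{1\}$. I would also reach $\bot$ by applying $e$, which collides both coordinates at $2$. Since each such $X$ avoids $1$ and each such $Y$ contains $1$, both $A_X$ and $A_{\ol Y}$ are atoms, so \distlemma{} gives pairwise distinguishability and distinguishability from $\bot$; the number of these $(X,Y)$ is $\sum_{x}\sum_{y}\binom{n-1}{x}\binom{n-x-1}{y-1}$, and adding $\bot$ meets the bound.

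The main obstacle is the reachability in this last case, specifically the sub-case $Y=\{1\}$ with $\ol S\setminus\{1\}\neq\emp$: a transformation fixing $1$ cannot collapse the surplus states of $\ol S$ into state $1$, so one step does not suffice. I would handle it by first applying a transformation fixing $1$ that sends $S$ onto a target $X_0\subseteq\{2,\dots,n-1\}$ (possible because $\ol S\setminus\{1\}\neq\emp$ forces $|S|\le n-2$, hence $|X|\le n-2$) and sends $\ol S\setminus\{1\}$ to $\{n\}$, then applying $d=(n\to 1)$ to fold those states into $1$ without disturbing $X_0$, and finally a fixing-$1$ bijection carrying $X_0$ onto the desired $X$. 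Checking that these compositions stay inside $\cD_S$ without prematurely hitting $\bot$, and that the degenerate alphabets for $n=2,3$ behave, are the remaining routine verifications.
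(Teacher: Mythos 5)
Your proposal is correct and follows the same skeleton as the paper's proof: exhibit a word taking $(S,\ol{S})$ to a final state of $\cD_S$ to certify each atom, count reachable states via transformations in the transition semigroup, and invoke \distlemma{} for distinguishability. The genuine difference is that you re-derive what the paper imports: the paper cites \cite{BrYe11} for $L_n$ being a left ideal of complexity $n$, and, crucially, cites \cite{BrSz14} for the fact that the transition semigroup of $\cD_n$ contains \emph{all} transformations of $Q_n$ that fix $1$ (including those mapping other states to $1$) together with all constant transformations. You instead only establish the weaker fact that $a,b,c$ generate the transformations fixing $1$ that also map $\{2,\dots,n\}$ into itself, and this is precisely why your ``main obstacle'' ($Y=\{1\}$ with $\ol{S}\setminus\{1\}\ne\emp$) appears: it is an artifact of the weaker toolkit, invisible in the paper because the cited semigroup fact already supplies a one-step transformation collapsing $\ol{S}\setminus\{1\}$ onto $1$. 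Your three-step composition (route the surplus of $\ol{S}$ to $n$, apply $d=(n\to 1)$, then permute) is in effect a proof of the missing special case, so nothing is lost; the self-contained version is arguably a gain. Two repairs are needed, though neither is a gap in substance. First, your sentence ``since $a,b,c$ each fix $1$, the semigroup contains every transformation of $Q_n$ that fixes $1$'' is false as written: for instance $(2\to 1)$ fixes $1$ but cannot be generated by $a,b,c$, which never map a state of $\{2,\dots,n\}$ to $1$; your later explicit compositions with $d$ quietly correct this, but the claim should be restated (it is true of the full semigroup, by \cite{BrSz14} or by your folding argument). Second, the worry about ``prematurely hitting $\bot$'' is vacuous: if $q\in Su\cap \ol{S}u$ for a prefix $u$ of $w$, writing $w=uv$ gives $qv\in Sw\cap \ol{S}w$; hence disjointness of the final images already forces disjointness after every prefix, so the state reached by $w$ is $(Sw,\ol{S}w)$ whenever $Sw\cap \ol{S}w=\emp$, and no step-by-step check is required.
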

\begin{proof}
It was proved in~\cite{BrYe11} that $L_n$ is a left ideal of complexity $n$.
The case $n=2$ is easily verified; hence assume $n\ge 3$.
It was proved in~\cite{BrSz14} that the transition semigroup of $\cD_n$ contains all transformations of $Q_n$ that fix 1 and all constant transformations.
Recall that if $A_S$ is an atom of a left ideal, then either $S = Q_n$ or $1 \not\in S$. 
For all $S$ with $1 \not\in S$, from $(S,\ol{S})$ we can reach the final state $(\{n\},\{1\})$ of $\cD_S$ (or $(\emp,\{1\})$ for $S = \emp$) by transformations that fix 1.
For $S = Q_n$, let $w = (Q_n \to n)$; then $(Q_n,\emp)w = (\{n\},\emp)$ is final in $\cD_S$.
Hence if $S = Q_n$ or $1 \not\in S$, then $A_S$ is an atom of $L_n$, and so $L$ has $2^{n-1}+1$ atoms.

We now count reachable and distinguishable states in the DFA of each atom. We know that $A_{Q_n}$ has complexity $n$ for all left ideals, so assume $1 \not \in S$.
If $S = \emp$, the initial state of $\cD_S$ is $(\emp,Q_n)$. By transformations that fix 1 we can reach $(\emp,Y)$ for all $Y$ with $\{1\} \subseteq Y \subseteq Q_n$. There are $2^{n-1}$ of these states. Since $\ol{Y}$ does not contain 1, $A_{\ol{Y}}$ is an atom, so all of these states are distinguishable by \distlemma.

If $\emp \subsetneq S \subsetneq Q_n$, the initial state of $\cD_S$ is $(S,\ol{S})$.
Since $1 \not\in S$, by transformations that fix 1, we can reach any state $(X,Y)$ with $1 \le |X| \le |S|$, $1 \le |Y| \le n-|S|$, $1 \not\in X$, $1 \in Y$, and $X \cap Y = \emp$.
There are $\sum_{x=1}^{|S|}\sum_{y=1}^{n-|S|}\binom{n-1}{x}\binom{n-x-1}{y-1}$ such states.
They are all distinguishable from each other and from $\bot$ by \distlemma, since $1 \not\in X$, $1 \in Y$ imply that $A_X$ and $A_{\ol{Y}}$ are both atoms.
We can also reach $\bot$ from $(S,\ol{S})$ by any constant transformation, and so  the bound is met. \qed
\end{proof}

\begin{comment}
It was also proved in~\cite{BrDa14} that right ideals of complexity $n$ have at most $2^{n-1}$ atoms, and we proved earlier that a left ideal of complexity $n$ has at most $2^{n-1}+1$ atoms.
In fact, there is a complexity-preserving injection from the set of atoms of $R_n$ (the most complex right ideal defined in~\cite{BrDa14}) to the set of atoms of our most complex left ideal $L_n$.
The set of atoms of $R_n$ is $\{A_S \mid n \in S\}$; if we let $t\colon Q_n \rightarrow Q_n$ be the transposition $(1,n)$, then $\ol{St}$ is a subset of $Q_n$ that does not contain 1, and thus the map $A_S \mapsto A_{\ol{St}}$ sends $A_S$ to an atom $A_X$ of $L_n$ with $|X| = n-|S|$. As we just argued, this means the map preserves quotient complexity of atoms. 
The only atom of $L_n$ that is not in the range of this injection is $A_{Q_n}$, which has complexity $n$ and does not correspond to any atom of $R_n$.
\end{comment}

\section{Complexity of Atoms in Two-Sided Ideals}
\subsection{Upper Bounds}
A language  is a two-sided ideal if it is both a right  ideal  and a left ideal.

\label{sec:2sided_bounds}
\begin{proposition}
\label{prop:bounds_2sided}
Suppose $L$ is a two-sided ideal with $n\ge 2$ quotients. Then $L$ has at most $2^{n-2}+1$ atoms. The  complexity $\kappa(A_S)$ of atom $A_S$ satisfies

\begin{equation}
	\kappa(A_S) 
	\begin{cases}
		= n, 			& \text{if $S=Q_n$;}\\
		\le 2^{n-2}+n-1,		& \text{if $S=Q_n\setminus\{1\}$;}\\
		 \le 1 + \sum_{x=1}^{|S|}\sum_{y=1}^{n-|S|}\binom{n-2}{x-1}\binom{n-x-1}{y-1},
		 			& \text{otherwise.}
		\end{cases}
\end{equation}

\end{proposition}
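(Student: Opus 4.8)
The plan is to combine the structural facts already used for right ideals (Proposition~\ref{prop:bounds_right}) and left ideals (Proposition~\ref{prop:bounds_left}), together with one new observation that links them. Since $L$ is a right ideal we keep $K_n=\Sig^*$, so $n$ is a sink and $nw=n$ for every $w$; since $L$ is a left ideal we keep $K_1=L$, so every quotient contains $L$. The new observation is that applying $w^{-1}$ to the inclusions $K_1\subseteq K_i$ gives $K_{1w}\subseteq K_{iw}$ for all $i,w$; that is, in any quotient $(X,Y)$ of an atom the quotient indexed by the image of $1$ lies below every quotient indexed by an image of a state of $S$. From $K_n=\Sig^*$ we get $n\in X$ in every nonempty quotient, and exactly as for left ideals the left‑ideal property forces $1\notin X$ while the identity $\ol{K_1}\cap\ol{K_i}=\ol{K_i}$ lets us assume $1\in Y$. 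As for left ideals, $A_S$ is an atom only if $1\notin S$ or $S=Q_n$, and now additionally $n\in S$; counting such $S$ gives $2^{n-2}+1$ atoms.

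For $S=Q_n$ we have $A_{Q_n}=\bigcap_i K_i=K_1=L$, so $\kappa(A_S)=n$. For $\emp\subsetneq S\subsetneq Q_n$ with $S\ne Q_n\setminus\{1\}$ I would count admissible pairs $(X,Y)$ directly: every nonempty quotient has $n\in X$, $1\notin X$, $1\in Y$, $X\cap Y=\emp$, with $1\le|X|\le|S|$ and $1\le|Y|\le n-|S|$. Choosing $X$ means taking $n$ together with $|X|-1$ elements of $Q_n\setminus\{1,n\}$, and then choosing $Y$ means taking $1$ together with $|Y|-1$ elements of $Q_n\setminus(X\cup\{1\})$ (a set of size $n-|X|-1$); summing and adding the empty quotient yields $1+\sum_{x=1}^{|S|}\sum_{y=1}^{n-|S|}\binom{n-2}{x-1}\binom{n-x-1}{y-1}$. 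This is just the common refinement of the right‑ and left‑ideal counts and is routine once the structural facts are in place.

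The real work is the case $S=Q_n\setminus\{1\}$, where $\ol S=\{1\}$, so the second coordinate of every state of $\cD_S$ is a singleton $\{1w\}$ and the generic count (which assumes $|Y|\le n-|S|=1$) is too crude: the redundancy $\ol{K_1}\cap\ol{K_{1w}}=\ol{K_{1w}}$ secretly enlarges $Y$. Writing $p=1w$ and $M=\bigcap_{k\in Sw}K_k$, every quotient is $M\cap\ol{K_p}$, and by the new observation $K_p\subseteq M$, so the quotient equals $M\setminus K_p$. I would split on $p$. If $p=1$ the quotient is $M\setminus L$, determined by $X=Sw$ with $n\in X$, $1\notin X$, giving at most $2^{n-2}$ languages. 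If $p\ne1$, the left‑ideal property gives $1\notin\timg(w)$, and $p=n$ is impossible because any word with $1w=n$ lies in $L$ and hence maps every state to $n$ (so the only transformation with $1\mapsto n$ is the constant $(Q_n\to n)$); thus $p\in\{2,\dots,n-1\}$, and nonemptiness forces $p\notin X$, whence $K_p\subsetneq K_k$ for every $k\in X$.

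The main obstacle is to show these two contributions, with the single empty quotient, never exceed $2^{n-2}+n-1$. The subtlety is that a naive enumeration of pairs $(X,\{p\})$ badly overshoots: one can write individual transformations realising many different $M\setminus K_p$, but they cannot coexist in one transition semigroup. The controlling fact is again $K_{1w}\subseteq K_{iw}$, i.e. monotonicity of the action for the inclusion order $p\preceq q\Leftrightarrow K_p\subseteq K_q$ on states: for $p\ne1$ every $k\in Sw$ satisfies $K_k\supsetneq K_p$, so $Sw$ is confined to the strict up‑set of $p$. I would therefore bound $\kappa(A_S)$ by bounding the reachable states of $\cD_S$ under this monotonicity rather than by counting languages abstractly, and show that as the inclusion order among $K_2,\dots,K_{n-1}$ ranges over all possibilities the total is maximised at $2^{n-2}+n-1$. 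The two extreme configurations illustrate why the split is not fixed: for an antichain the $p\ne1$ quotients collapse to the $n-2$ languages $\ol{K_p}$ because $Sw$ is forced to be $\{n\}$, while for a chain the $p=1$ and $p\ne1$ contributions rebalance yet the total is unchanged. Making this last step uniform over all inclusion orders, and matching it to the closed reachable set of $\cD_S$, is the delicate point; distinctness of the surviving quotients then follows from \distlemma.
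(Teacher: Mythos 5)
Your counting of the atoms, the case $S=Q_n$, and the generic case $\emp\subsetneq S\subsetneq Q_n$ with $S\ne Q_n\setminus\{1\}$ track the paper's proof essentially step for step: the left-ideal structure forces $1\notin X$ and (after the normalization $\ol{K_1}\cap\ol{K_i}=\ol{K_i}$) $1\in Y$, the right-ideal structure forces $n\in X$ and $n\notin Y$, and the double sum of binomials counts the admissible pairs. The genuine gap is in the one case that carries all the difficulty, $S=Q_n\setminus\{1\}$. You set it up correctly---every nonempty quotient is a state $(X,\{p\})$ with $n\in X$, $p=1w$, and the monotonicity $K_{1w}\subseteq K_{iw}$ confines $X$ to the strict up-set of $p$---but at exactly that point the proposal stops being a proof: ``show that as the inclusion order among $K_2,\dotsc,K_{n-1}$ ranges over all possibilities the total is maximised at $2^{n-2}+n-1$'' \emph{is} the assertion to be proved, and you offer no argument for it, explicitly deferring ``the delicate point.'' The paper closes this by a concrete computation: define the successor sets $S(p)=\{q\in Q_n\mid K_p\subsetneq K_q\}$; note that for fixed $j$ the relevant states satisfy $\{n\}\subseteq X\subseteq S(j)$, and that $(Y\cup\{i,n\},\{j\})$ is \emph{indistinguishable} from $(\{i,n\},\{j\})$ whenever $i\in S(j)\setminus\{n\}$ and $\emp\ne Y\subseteq S(i)\setminus\{n\}$, because $K_i\cap\bigcap_{q\in Y}K_q=K_i$; this yields the per-$j$ bound $2^{|S(j)|-1}-\sum_{i\in S(j),\,i\ne n}\bigl(2^{|S(i)|-1}-1\bigr)$; finally one sums over $j=1,\dotsc,n-1$, pulls out the $j=1$ term (where $S(1)=\{2,\dotsc,n\}$ gives exactly $1+(n-2)+2^{n-2}=2^{n-2}+n-1$), and checks the remainder is $\le 0$ because $\sum_{j=2}^{n-1}2^{|S(j)|-1}$ cancels against $\sum_{i\in S(1),\,i\ne n}2^{|S(i)|-1}$ while each $2^{|S(i)|-1}\ge 1$ (as $n\in S(i)$). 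Nothing in your outline supplies this bookkeeping or a substitute for it.

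Two features of your stated plan indicate it would not go through as described. First, you propose to bound the \emph{reachable} states of $\cD_S$ ``rather than counting languages abstractly,'' but reachability gives no uniform control over all two-sided ideals: when the quotients $K_2,\dotsc,K_{n-1}$ form a chain, a transition semigroup of order-preserving maps can reach on the order of $2^{n-1}$ distinct states $(X,\{p\})$, far more than $2^{n-2}+n-1$; the count only comes down because distinct states carry \emph{equal languages} (for a chain, every nonempty quotient is $K_{\min X}\setminus K_p$), i.e., the collapse is by indistinguishability, which is precisely what the paper's successor-set argument quantifies. Second, your claim that for a chain ``the total is unchanged'' is false for $n\ge5$: a chain admits at most $\binom{n}{2}+1$ distinct quotients of $A_S$, which is strictly below $2^{n-2}+n-1$ (e.g., $11<12$ at $n=5$). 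So the correct picture is not a rebalancing identity across inclusion orders but an inequality, essentially saturated only by the antichain configuration, and proving that inequality uniformly is exactly the missing successor-set computation.
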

\begin{proof}
Since $L$ is a left ideal, $A_S$ is an atom only if $S = Q_n$ or $S \subseteq Q_n \setminus \{1\}$; since $L$ is a right ideal we must also have $n \in S$.
This gives our upper bound of $2^{n-2}+1$ atoms.

We know that $A_{Q_n}$ has complexity $n$ since this is true for left ideals.
Since $L$ is a right ideal, $A_\emp$ is not an atom, so we can assume $S \ne \emp$.

Suppose $A_S$ is an atom of $L$, with $S \ne Q_n$ and $S \ne Q_n \setminus \{1\}$.
We proved for left ideals that the number of distinct non-empty quotients of $A_S$ is bounded by the number of pairs $(X,Y)$, $1 \le |X| \le |S|$, $1 \le |Y| \le n-|S|$, $1 \not\in X$, $1 \in Y$, $X \cap Y = \emp$.
Since $L$ is a right ideal, we must also have $n \in X$ and $n \not\in Y$. 
There are $\binom{n-2}{|X|-1}$ possibilities for $X$, since $X$ must contain $n$ and the remaining $|X| - 1$ elements are taken from $Q_n \setminus \{1,n\}$.
If $X$ is fixed, there are $\binom{n-|X|-1}{|Y|-1}$ possibilities for $Y$, since $Y$ must contain 1 and the remaining $|Y|-1$ elements are taken from $(Q_n \setminus X) \setminus \{n\}$. Since $Q_n \setminus X$ always contains $n$,  the size of $(Q_n \setminus X) \setminus \{n\}$ is always $n -|X|-1$.
Summing over the possible sizes of $X$ and $Y$ and adding 1 for the empty quotient, we get the required bound.

This  leaves the case of $S = Q_n \setminus \{1\}$.
Each quotient of $A_S$ has the form
\[ w^{-1}A_S =  \left(\bigcap_{i\in X} K_i \right)\cap \ol{K_j},\]
where $K_j = w^{-1}K_1 = w^{-1}L$, and $n \in X$. We can view the non-empty quotients as states $(X,\{j\})$ of the DFA $\cD_S$ for $A_S$, where $\cD$ is a minimal DFA for $L$.
We must have $n \in X$ and $X \cap \{j\} = \emp$,  and so $j \not \in X$. Hence $\{n\} \subseteq X \subseteq Q_n \setminus \{j\}$, and there are $2^{n-2}$ choices for $X$.
However, for each $X$ there are potentially $n-1$ choices for $j$, giving an upper bound of $(n-1)2^{n-2}$ for the non-empty quotients, which is not tight. We need to look more carefully at the distinguishability relations between states of $\cD_S$.

For each $p$ in $Q_n$, define the set $S(p) = \{q \in Q_n \mid K_p \subsetneq K_q\}$. The elements of $S(p)$ are called the \emph{successors} of $p$.
Note that $p$ is not  a successor of itself.

Since $L$ is a left ideal, we have $L \subseteq K_i$ for all $i \in Q_n$.
It follows that $w^{-1}L = K_j \subseteq w^{-1}K_i$ for all $i \in Q_n$.
Thus in the formula for $w^{-1}A_S$ above, we have $K_j \subseteq K_i$ for all $i \in X$.
But if $K_j = K_i$ for any $i \in X$, then $w^{-1}A_S$ is empty. 
Thus $K_j \subsetneq K_i$ for all $i \in X$, which implies $X \subseteq S(j)$.

$X$ must contain $n$, since $L$ is a right ideal. Thus for each $j$, there are at most $2^{|S(j)|-1}$ distinguishable states $(X,\{j\})$.
The index $j$ can range from $1$ to $n-1$; if $j = n$ then $X \cap \{n\}$ is non-empty.
This gives an upper bound of $\sum_{j=1}^{n-1} 2^{|S(j)|-1}$ for the number of non-empty quotients. 

This bound still is not tight, so we refine it as follows.
Choose $i \ne n \in S(j)$ and a non-empty set $Y \subseteq S(i) \setminus \{n\}$. 
Then $K_i \subsetneq K_q$ for all $q \in Y$, so we have
$K_i \cap \left(\bigcap_{q \in Y} K_i \right) = K_i$.
This means $(\{i,n\},\{j\})$ is indistinguishable from $(Y \cup \{i,n\},\{j\})$.
Since $Y$ is non-empty and does not contain $n$, there are at most $2^{|S(i)|-1}-1$ possibilities for $Y$.

From this we get a new upper bound for the number of distinguishable states $(X,\{j\})$ for a fixed $j$, as follows: first take our previous bound of $2^{|S(j)|-1}$. Then for each $i \ne n \in S(j)$,   subtract $2^{|S(i)|-1}-1$ to account for the states $(Y \cup \{i,n\},\{j\})$ that are equivalent to $(\{i,n\},\{j\})$. Our new bound is
\[2^{|S(j)|-1} - \sum_{\substack{i \in S(j)\\ i \ne n}} (2^{|S(i)|-1} - 1) = 
(|S(j)| - 1) + 2^{|S(j)|-1} - \sum_{\substack{i \in S(j)\\ i \ne n}} 2^{|S(i)|-1}.\] 
Summing over all possible values of $j$, and adding 1 for the empty quotient, we get the following bound on the complexity of $A_S$:
\[1 + \sum_{j=1}^{n-1}\left((|S(j)| - 1) + 2^{|S(j)|-1} - \sum_{\substack{i \in S(j)\\ i \ne n}} 2^{|S(i)|-1}\right).\] 
Noting that $S(1) = \{2,\dotsc,n\}$ and $|S(1)| = n-1$, we pull out the $j = 1$ case from the outermost summation:
{\small
\[1 + (n-2) + 2^{n-2} - \sum_{\substack{i\in S(1)\\i \ne n}}2^{|S(i)|-1} + \sum_{j=2}^{n-1}\left((|S(j)| - 1) + 2^{|S(j)|-1} - \sum_{\substack{i\in S(j)\\i \ne n}} 2^{|S(i)|-1}\right).\]}
Observe that $1 + (n-2) + 2^{n-2}$ is equal to $2^{n-2} + n-1$, the bound we are trying to prove. We will show that the  value of the  rest of this formula is always less than or equal to zero. We pull 
$\sum_{j=2}^{n-1}2^{|S(j)|-1}$ out to the front:
{\small
\[ 2^{n-2} + n-1 + 
	\sum_{j=2}^{n-1}2^{|S(j)|-1} - 
	\sum_{\substack{i\in S(1)\\i \ne n}}2^{|S(i)|-1} + 
	\sum_{j=2}^{n-1}\left((|S(j)| - 1) - 
	\sum_{\substack{i \in S(j)\\ i \ne n}} 2^{|S(i)|-1}\right).
\]}
Note that  
$\sum_{j=2}^{n-1}2^{|S(j)|-1}= \sum_{\substack{i\in S(1)\\i \ne n}}2^{|S(i)|-1}$, so cancellation occurs:
\[ 2^{n-2} + n-1 + 
	\sum_{j=2}^{n-1}\left((|S(j)| - 1) - \sum_{\substack{i \in S(j)\\ i \ne n}} 2^{|S(i)|-1}\right).\]
Now, the value of the innermost summation is always greater than or equal to $|S(j)|-1$: for each $i \in S(j)$, $i \ne n$, we  know  that $n$ is a successor of $i$, and hence $S(i) \ge 1$ and $2^{|S(i)|-1} \ge 1$. Thus the value of the outermost summation is always less than or equal to zero.
It follows that the number of quotients of $A_S$ is at most $2^{n-2} + n-1$.
\qed
\end{proof}

Next we address the question of tightness of the bounds for two-sided ideals.\
For $n=1$, $L=\Sig^*$ is a two-sided ideal with one atom of complexity 1; so the bound of Proposition~\ref{prop:bounds_2sided} does not hold. 

The DFA of  Definition~\ref{def:2sided} and Figure~\ref{fig:2sided} was introduced in~\cite{BrYe11}.
It was shown in~\cite{BrSz14} that the language of the DFA of  Definition~\ref{def:2sided} has the largest syntactic semigroup among left ideals of complexity $n$.
 Moreover, it was shown in~\cite{BrLiu15} that this language also meets the bounds on the quotient complexity of boolean operations, concatenation and star. Together with our result about the number of atoms and their complexity, this shows that this language is the most complex two-sided ideal.

\begin{definition}
\label{def:2sided}
Let $n\ge 4$, and let  
$\cD_n =(Q_n,\Sig,\delta_n, 1,\{n\})$ be the DFA with
 $\Sig=\{a,b,c,d,e,f\}$,
$a= (2,3,\ldots,n-1)$,
$b= (2,3)$,
$c= (n-1\to 2)$,
$d= (n-1\to 1)$,
$e= (Q_{n-1}\to 2)$,
and $f= (2\to n)$.
For $n=4$,  inputs $a$ and $b$ coincide.
Also, let $\cD_3=(Q_3,\{a,b,c\},\delta_3, 1, \{3\})$, where
$a= \one$,
$b=(Q_2\to 2)$,
 $c=(2\to 3)$,
and 
let $\cD_2=(Q_2,\{a,b,c\},\delta_2, 1, \{2\})$, where 
$a= \one$,
$b=(Q_2\to 2)$,
$c=(Q_2\to 1)$. 
Let $L_n$ be the language accepted by $\cD_n$.
\end{definition}

\begin{figure}[h]
\unitlength 7pt
\begin{center}\begin{picture}(43,17)(0,-1)
\gasset{Nh=2.5,Nw=4,Nmr=1.25,ELdist=0.4,loopdiam=1.5}
\node(n)(8,14){$n$}\rmark(n)
\drawloop(n){$a,b,c,d,e,f$}
\drawedge(2,n){$f$}
\node(1)(1,7){1}\imark(1)
\node(2)(8,7){2}
\node(3)(15,7){3}
\node(4)(22,7){4}
\node[Nframe=n](4dots)(29,7){$\dots$}
	{\small
\node(n-2)(36,7){$n-2$}
	}
	{\small
\node(n-1)(43,7){$n-1$}
	}
\drawedge(1,2){$e$}
\drawloop(1){$a,b,c,d,f$}
%\drawloop[loopangle=270,ELpos=35](2){$c,d,e$}
\drawloop[loopangle=270,ELdist=.2](2){$c,d,e$}
\drawedge[curvedepth= 1,ELdist=.1](2,3){$a,b$}
\drawedge[curvedepth= 1,ELdist=-1.2](3,2){$b,e$}
\drawloop(3){$c,d,f$}
\drawedge(3,4){$a$}
\drawedge[curvedepth= 2.5,ELdist=-1](4,2){$e$}
\drawedge(4,4dots){$a$}
\drawedge(4dots,n-2){$a$}
\drawloop(4){$b,c,d,f$}
\drawloop(n-2){$b,c,d,f$}
\drawedge(n-2,n-1){$a$}
\drawedge[curvedepth= 5,ELdist=-1.2](n-2,2){$e$}
\drawedge[curvedepth= -9.5,ELdist=-1.2](n-1,2){$a,c,e$}
\drawedge[curvedepth= 9.5,ELdist=-1.5](n-1,1){$d$}
\drawloop(n-1){$b,f$}
\end{picture}\end{center}
\caption{ DFA of a  two-sided ideal whose atoms meet  the bounds.}
\label{fig:2sided}
\end{figure}
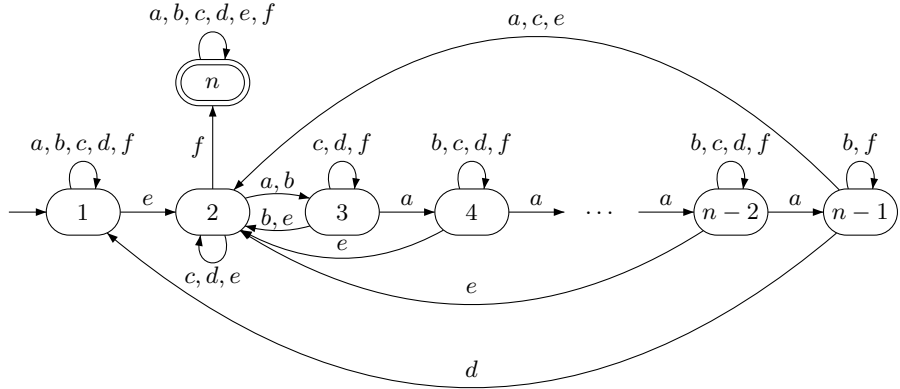

\begin{theorem}
For $n\ge 2$, the language $L_n$ of Definition~\ref{def:2sided} is a two-sided ideal that has $2^{n-2}+1$ atoms and each atom meets the bounds of Proposition~\ref{prop:bounds_2sided}.
\end{theorem}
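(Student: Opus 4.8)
The plan is to follow the template of the right- and left-ideal theorems: verify the small cases by hand, invoke the known structure of $L_n$ and its transition semigroup, and then for each shape of $S$ count the reachable states of $\cD_S$ and argue pairwise distinguishability via \distlemma. The cases $n\le 3$ can be checked directly, so assume $n\ge 4$. The languages $L_n$ were shown to be two-sided ideals of complexity $n$ in \cite{BrYe11}, and \cite{BrSz14} determines the transition semigroup of $\cD_n$; I would record that it contains every transformation fixing both $1$ and $n$, together with the collapsing maps supplied by $e$ and $f$. Since $n$ is an absorbing final state ($K_n=\Sig^*$) and $1$ carries the smallest quotient $K_1=L$, the atoms are exactly the $A_S$ with $S=Q_n$ or with $n\in S,\ 1\notin S$, matching the count $2^{n-2}+1$ of Proposition~\ref{prop:bounds_2sided}. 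To see each such $A_S$ is non-empty I would exhibit a word taking $(S,\ol{S})$ to the final state $(\{n\},\{1\})$: a transformation fixing $1$ and $n$ that collapses $S\setminus\{n\}$ onto $n$ (gather the middle states onto $2$ using $a,b,c$, then apply $f$) and $\ol{S}\setminus\{1\}$ onto $1$.

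Next I would treat the two routine cases. For $S=Q_n$ we have $A_{Q_n}=L$ of complexity $n$, which holds for every left ideal. For $\emp\subsetneq S\subsetneq Q_n$ with $S\ne Q_n\setminus\{1\}$, every non-empty quotient is a state $(X,Y)$ with $n\in X$, $1\in Y$, $1\notin X$, $n\notin Y$ and $X\cap Y=\emp$. Since $a,b,c$ realize all transformations of the middle block $\{2,\dots,n-1\}$ while fixing $1$ and $n$, a single transformation fixing $1$ and $n$ sends $(S,\ol{S})$ onto any prescribed such $(X,Y)$, so all $\sum_{x}\sum_{y}\binom{n-2}{x-1}\binom{n-x-1}{y-1}$ of them are reachable, and $\bot$ is reachable by a constant map. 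Because $n\in X$, $1\notin X$ forces $A_X$ to be an atom, and $n\in\ol{Y}$, $1\notin\ol{Y}$ forces $A_{\ol{Y}}$ to be an atom, \distlemma\ makes all these states pairwise distinguishable and distinguishable from $\bot$.

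The hard part will be the case $S=Q_n\setminus\{1\}$, where the bound $2^{n-2}+n-1$ comes not from a crude pair-count but from the successor analysis in the proof of Proposition~\ref{prop:bounds_2sided}. Here the quotients of $A_S$ are states $(X,\{j\})$ with $X\subseteq S(j)$, $n\in X$, $j\notin X$, and tightness requires the \emph{flat} successor structure $S(1)=\{2,\dots,n\}$ and $S(i)=\{n\}$ for every $i\in\{2,\dots,n-1\}$. I would establish this directly: $K_1=L$ is contained in every quotient, and for distinct middle states $p,q$ a suitable permutation of $\{2,\dots,n-1\}$ followed by $f$ sends $p$ to the sink $n$ while keeping $q$ in the middle (and symmetrically), so $K_p$ and $K_q$ are incomparable and each is properly contained only in $K_n$. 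With this structure the reachable quotients are exactly the $2^{n-2}$ states $(X,\{1\})$ with $\{n\}\subseteq X\subseteq Q_n\setminus\{1\}$, the $n-2$ states $(\{n\},\{j\})$ for $2\le j\le n-1$, and $\bot$, all reachable by the transformations above. Distinguishing the $(X,\{1\})$ states is immediate from \distlemma, since each $A_X$ is an atom.

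The states $(\{n\},\{j\})$, however, have $1\in\ol{\{j\}}$, so $A_{\ol{\{j\}}}$ is not an atom and \distlemma\ does not apply; here I would argue directly. The state $(\{n\},\{j\})$ under $w$ moves to $(\{n\},\{jw\})$, which is final iff $jw\ne n$; hence $(\{n\},\{j\})$ accepts $w$ iff $jw\ne n$. Since $F=\{n\}$, any word separating states $j$ and $j'$ in the minimal DFA $\cD_n$ drives exactly one of $j,j'$ into the sink, and therefore separates $(\{n\},\{j\})$ from $(\{n\},\{j'\})$ and from $(\{n\},\{1\})$; for $(X,\{1\})$ with $X\ne\{n\}$ the $X$-parts differ and \distlemma\ applies. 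Summing gives $2^{n-2}+(n-2)+1=2^{n-2}+n-1$, meeting the bound. The main obstacle is precisely this interaction: the refined count forces me to pin down the flat successor structure, and the failure of \distlemma\ on the $j\ne 1$ states forces the separate minimality argument above.
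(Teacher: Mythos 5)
Your proposal is correct, and its skeleton is the same as the paper's: small cases by hand, the transition-semigroup facts from~\cite{BrSz14}, the atom count $2^{n-2}+1$, reachability via transformations fixing $1$ and $n$, distinguishability via \distlemma, and a separate treatment of the exceptional case $S=Q_n\setminus\{1\}$. Two differences are worth recording. (i) Your verification of the flat successor structure ($S(1)=\{2,\dotsc,n\}$ and $S(i)=\{n\}$ for $2\le i\le n-1$) is correct but unnecessary: the upper bound $2^{n-2}+n-1$ holds for every two-sided ideal by Proposition~\ref{prop:bounds_2sided}, so tightness needs only the exhibition of $2^{n-2}+n-1$ reachable, pairwise distinguishable states; the paper accordingly skips this step entirely. (ii) Where the paper separates $(\{n\},\{i\})$ from $(\{n\},\{j\})$ by the explicit word $a^{n-i}f$ (which sends the former to $\bot$ and the latter to a final state), you argue abstractly: since every generator fixes $n$, the state $(\{n\},\{j\})$ accepts $w$ iff $jw\ne n$, so, because $F=\{n\}$, any word witnessing distinguishability of $j$ and $j'$ in the minimal DFA $\cD_n$ also separates the two atom states. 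This is genuinely cleaner, handles $j'=1$ uniformly, and would generalize to any DFA whose unique final state is a sink. One point to tighten: for reachability of the states $(\{n\},\{j\})$, $2\le j\le n-1$, your phrase ``all reachable by the transformations above'' is too loose, since transformations fixing $1$ and $n$ cannot move the second coordinate off $\{1\}$; an explicit witness using $e$ is needed, e.g.\ the paper's word $ea^{j-2}$ applied to $(\{n\},\{1\})$, though the toolkit you stated does contain $e$.
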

\begin{proof}

It was proved in~\cite{BrYe11} that $L_n$ is a two-sided ideal of complexity $n$.
The cases with  $n<4$ are easily verified; hence assume $n\ge 4$.

The following observations were made in~\cite{BrSz14}:  
Transformations $\{a,b,c\}$ restricted to $Q_n \setminus \{1,n\}$ generate all the transformations of $\{2,\ldots,n-1\}$. Together with $d$ and $f$, they generate all transformations of $Q_n$ that fix $1$ and $n$.
Also, we have $ef = (Q_n \to n)$.

Recall that if $A_S$ is an atom of a two-sided ideal, then $n \in S$, and either $S = Q_n$ or $1 \not\in S$.
We know $A_{Q_n}$ is an atom of complexity $n$ for all left ideals (and hence all two-sided ideals), so assume $n \in S$, $1 \not\in S$.
Then $1 \in \ol{S}$,  and so from state $(S,\ol{S})$ in $\cD_S$ we can reach the final state $(\{n\},\{1\})$ by transformations that fix 1 and $n$.
Hence $A_S$ is an atom for every $S$ with $n \in S$, $1 \not\in S$. There are $2^{n-2}$ of these atoms, as well as the atom $A_{Q_n}$, for a total of $2^{n-2}+1$.

Consider the atom $A_S$ for $S \ne Q_n$ and $S \ne Q_n \setminus \{1\}$. 
In the DFA $\cD_S$, the initial state is $(S,\ol{S})$, and we have $n \in S$, $1 \not\in S$.
By transformations that fix $1$ and $n$, we can reach $(X,Y)$ for all $X,Y \subseteq Q_n$ such that $n \in X$, $1 \in Y$, $X \cap Y = \emp$, $1 \le |X| \le |S|$, $1 \le |Y| \le n-|S|$.
There are $\sum_{x=1}^{|S|}\sum_{y=1}^{n-|S|}\binom{n-2}{x-1}\binom{n-x-1}{y-1}$ such states.
Since $n \in X$, $1 \not \in X$ and $n \in \ol{Y}$, $1 \not\in \ol{Y}$ we see that $A_X$ and $A_{\ol{Y}}$ are atoms. Hence by \distlemma, all of these states are distinguishable from each other and from $\bot$.
Since $S \ne \emp$, we can reach $\bot$ from $(S,\ol{S})$ by $ef = (Q_n \to n)$. Hence  the  bound is met.

It remains to show that the complexity of $A_S$, $S = Q_n \setminus \{1\}$ also meets the bound.
The initial state of $\cD_S$ is $(\{2,\dotsc,n\},\{1\})$. By transformations that fix $1$ and $n$, we can reach all $2^{n-2}$ states of the form $(X,\{1\})$ with $\{n\} \subseteq X \subseteq Q_n \setminus \{1\}$. From $(\{n\},\{1\})$, we can reach $n-2$ additional states $(\{n\},\{i\})$ for $2 \le i \le n-1$ by $ea^{i-2}$.
Finally, we can reach the sink state $\bot$ from the initial state by $ef = (Q_n \to n)$. This gives a total of $2^{n-2} + n-1$ reachable states, which matches the upper bound.

To see these states are distinguishable, note that $A_X$ is an atom if $\{n\} \subseteq X \subseteq Q_n \setminus \{1\}$. Also, $A_{\ol{\{1\}}} = A_{Q_n \setminus \{1\}}$ is an atom. Hence by \distlemma, all states of the form $(X,\{1\})$ are distinguishable from each other and from $\bot$.
Also, $(\{n\},\{i\})$ is distinguished from $(\{n\},\{j\})$ by $a^{n-i}f$, which sends the former state to the non-final state $\bot$, but sends the latter to some final state $(\{n\},\{k\})$ with $k \ne 2$. And each $(\{n\},\{j\})$, $1 \le j \le n-1$ is a final state, so it is distinguishable from all states of the form $(X,\{1\})$, $X \ne \{n\}$ and from $\bot$, since they are not final.
Hence all $2^{n-2}+n-1$ reachable states are distinguishable.
\qed
\end{proof}

\section{Some Numerical Results}
The following tables compare the maximal complexities for atoms $A_S$ of two-sided ideals (first entry), left ideals (second entry) and regular languages (third entry) with complexity $n$. Right ideals are omitted because their complexities are essentially the same as those of left ideals, by Remark \ref{rem:symmetry}.
When the maximal complexity is undefined (e.g., because no languages in a class have atoms $A_S$ for a particular size of $S$) this is indicated by an asterisk. The maximum values for each $n$ are in boldface. The $n^{\rm{th}}$ entry in the \emph{ratio} row shows the approximate value of $m_n/m_{n-1}$, where $m_i$ is the $i^{\rm{th}}$ entry in the \emph{max} row.

\[
\begin{array}{|c|c|c|c|c|c|c|}
\hline
\ n\ & 1\ &\ 2\ &\ 3\ &\ 4\ &\ 5\ &\ \cdots\\
\hline
\hline
|S|=0 & \ast/{\bf 1}/{\bf 1} &\ast/{\bf 2}/{\bf 3} &\ast/4/7 & \ast/8/15 & \ast/16/31 & \cdots\\
\hline
|S|=1 & {\bf 1}/{\bf 1}/{\bf 1} &{\bf 2}/{\bf 2}/{\bf 3} &3/{\bf 5}/{\bf 10} &5/13/29 & 9/33/76 & \cdots\\
\hline
|S|=2 &  & {\bf 2}/{\bf 2}/{\bf 3} &{\bf 4}/4/{\bf 10} &{\bf 8}/{\bf 16}/{\bf 43} &{\bf 20}/{\bf 53}/{\bf 141} &\cdots\\
\hline
|S|=3 &  &  & 3/3/7 & 7/8/29 & {\bf 20}/43/{\bf 141} &\cdots\\
\hline
|S|=4 &  &  &  & 4/4/15 & 12/16/76 & \cdots\\
\hline
|S|=5 &  &  &  &  & 5/5/31 & \cdots\\
\hline
\textit{max} & 1/1/1 & 2/2/3 & 4/5/10 & 8/16/43 & 20/53/141 & \cdots\\
\hline
\textit{ratio} & - & 2.00/2.00/3.00 & 2.00/2.50/3.33 & 2.00/3.20/4.30 & 2.50/3.31/3.28 & \cdots\\
\hline
\end{array}
\]

{\small
\[
\begin{array}{|c|c|c|c|c|}
\hline
\ n\ & 6\ &\ 7\ &\ 8\ &\ 9\ \\
\hline
\hline
|S|=0 & \ast/32/63 & \ast/64/127 & \ast/128/255 & \ast/256/511 \\
\hline
|S|=1 & 17/81/187 & 33/193/442 & 65/449/1,017 & 129/1,025/2,296 \\
\hline
|S|=2 & 48/156/406 & 112/427/1,086 & 256/1,114/2,773 & 576/2,809/6,859 \\
\hline
|S|=3 & {\bf 64}/{\bf 166}/{\bf 501} &{\bf 182}/{\bf 542}/{\bf 1,548} &484/1,611/4,425 & 1,234/4,517/12,043 \\
\hline
|S|=4 & 48/106/406 & {\bf 182}/462/{\bf 1,548} &{\bf 584}/{\bf 1,646}/{\bf 5,083} &{\bf 1,710}/{\bf 5,245}/{\bf 15,361} \\
\hline
|S|=5 & 21/32/187 & 112/249/1,086 & 484/1,205/4,425 & {\bf 1,710}/4,643/{\bf 15,361} \\ 
\hline
|S|=6 & 6/6/63 & 38/64/442 & 256/568/2,773 & 1,234/3,019/12,043 \\
\hline
|S|=7 &  & 7/7/127 & 71/128/1,017 & 576/1,271/6,859 \\
\hline
|S|=8 &  &  & 8/8/255 & 136/256/2,296 \\
\hline
|S|=9 &  &  &  & 9/9/511 \\
\hline
\textit{max} & 64/166/501 & 182/542/1,548 & 584/1,646/5,083 & 1,710/5,245/15,361 \\
\hline
\textit{ratio} &  3.20/3.13/3.55 & 2.84/3.27/3.09 & 3.21/3.04/3.28 & 2.93/3.19/3.02 \\
\hline
\end{array}
\]
}

\section{Conclusions}
We have derived tight upper bounds for the number of atoms and quotient complexity of atoms in right, left and two-sided regular ideal languages.
The recently discovered relationship between atoms and the Myhill and Nerode congruence classes opens up many interesting research questions. 
The quotient complexity of a language is equal to the number of Nerode classes, and 
the number of Myhill classes has also been used as a measure of complexity, called \emph{syntactic complexity} since it is equal to the size of the syntactic semigroup. 
We can view the number of atoms as a third fundamental measure of complexity for regular languages.

It is known~\cite{BrTa13} that the number of atoms of a regular language $L$ is equal to the quotient complexity of the \emph{reversal} of $L$. The quotient complexity of reversal has been studied for various classes of languages in the context of determining the quotient complexity of operations on regular languages. Hence, the maximal number of atoms is known for many language classes.

However, as far as we know the \emph{quotient complexity} of atoms has not been studied outside of regular languages and ideals.
For simplicity, let us call the atom congruence the \emph{left congruence}, the Nerode congruence the \emph{right congruence}, and the Myhill congruence the \emph{central congruence}. 
When computing the quotient complexity of atoms, we are computing the number of \emph{right congruence classes} of each \emph{left congruence class}. 
We can consider other permutations of this idea: how many right classes and left classes do the central classes have? How many central classes do the left classes have? 
These questions are outside the scope of this paper, but we believe they should be investigated.

\providecommand{\noopsort}[1]{}

%\bibliography{Ideal}
%\bibliographystyle{splncs03}
\end{document}